\documentclass[11pt]{article}
    \author{Boris Aronov\thanks{Department of Computer Science and Engineering,
         Tandon School of Engineering, New York University, Brooklyn, NY~11201, USA.
         Email: \texttt{boris.aronov@nyu.edu}.
         BA has been supported by NSF grants CCF~15-40656 and CCF~20-08551, and by grant 2014/170 from the US-Israel Binational Science Foundation.}
        \and Abdul Basit\thanks{School of Mathematics,
                    Monash University, Australia. Email: \texttt{abdul.basit@monash.edu}.}
        \and Mark de Berg\thanks{Department of Computing Science, TU Eindhoven,
            P.O.~Box 513, 5600 MB Eindhoven, the Netherlands.
            Email: \texttt{mdberg@win.tue.nl}.
            MdB is supported by the Netherlands' Organisation for
            Scientific Research (NWO) under project no.~024.002.003.}
        \and Joachim Gudmundsson\thanks{School of Computer Science, University of Sydney.
              Email: \texttt{joachim.gudmundsson@sydney.edu.au}.
            JG is supported by the Australian Government through the Australian Research Council DP180102870.}
        }
\date{}
\title{Partitioning Axis-Parallel Lines in 3D} 

\usepackage{mathtools,amsthm,amssymb,xcolor,graphicx,tabularx,xspace,verbatim}
\usepackage{microtype}
\usepackage{hyperref}
\usepackage{cite}
\usepackage{enumerate}

    \usepackage[a4paper,includeheadfoot,margin=2.54cm]{geometry}
    \usepackage[small]{caption}

    \setlength{\captionmargin}{20pt}
    \makeatletter
    \def\th@slanted{\sl }
    \makeatletter
    \theoremstyle{slanted}
    \newtheorem{theorem}{Theorem}
    \newtheorem{lemma}[theorem]{Lemma}
    
    \newtheorem{corollary}[theorem]{Corollary}
    \newtheorem{observation}[theorem]{Observation}
    
    \newtheorem{claim}[theorem]{Claim}
    \theoremstyle{definition}

\newenvironment{myquote}
  {\list{}{\leftmargin=5mm \rightmargin=5mm}\item[]}
  {\endlist}
  
\usepackage[shortlabels]{enumitem}

\renewcommand{\leq}{\leqslant}
\renewcommand{\geq}{\geqslant}
\renewcommand{\le}{\leqslant}
\renewcommand{\ge}{\geqslant}

\newenvironment{claiminproof}{\begin{myquote}\noindent\emph{Claim.}}{\end{myquote}}
\newenvironment{proofinproof}{\begin{myquote}\noindent\emph{Proof.}}{\hfill $\lhd$ \end{myquote}}

\newcommand{\Reals}{\mathbb{R}}
\newcommand{\Nats}{\mathbb{N}}

\newcommand{\A}{\ensuremath{\mathcal{A}}}

\newcommand{\NE}{\mbox{\sc ne}\xspace}
\newcommand{\NW}{\mbox{\sc nw}\xspace}
\newcommand{\SE}{\mbox{\sc se}\xspace}
\newcommand{\SW}{\mbox{\sc sw}\xspace}
\newcommand{\NM}{\mbox{\sc n-mid}\xspace}
\newcommand{\SM}{\mbox{\sc s-mid}\xspace}
\newcommand{\NEt}{\mbox{\sc ne-top}\xspace}
\newcommand{\NEb}{\mbox{\sc ne-bottom}\xspace}
\newcommand{\SEt}{\mbox{\sc se-top}\xspace}
\newcommand{\SEb}{\mbox{\sc se-bottom}\xspace}
\newcommand{\NWt}{\mbox{\sc nw-top}\xspace}

\newcommand{\SWt}{\mbox{\sc sw-top}\xspace}
\newcommand{\SWb}{\mbox{\sc sw-bottom}\xspace}

\newcommand{\total}{\mbox{\sc t}}
\newcommand{\north}{\mbox{\sc n}}
\newcommand{\east}{\mbox{\sc e}}
\newcommand{\south}{\mbox{\sc s}}
\newcommand{\west}{\mbox{\sc w}}

\newcommand{\Lfour}{L^4}

\DeclarePairedDelimiter\ceil{\lceil}{\rceil}
\DeclarePairedDelimiter\floor{\lfloor}{\rfloor}

\newcommand{\eps}{\varepsilon}
\newcommand{\mydef}{\coloneqq}

\newcommand{\gpar}{g_{\|}}
\newcommand{\gperp}{g_{\perp}}
\newcommand{\Ltb}{L^\mathrm{3}}
\newcommand{\Ltbx}{\Ltb_x}
\newcommand{\Ltby}{\Ltb_y}
\newcommand{\Ltbz}{\Ltb_z}

\begin{document}
\maketitle

\begin{abstract}
Let $L$ be a set of $n$ axis-parallel lines in $\Reals^3$. We are are interested in
partitions of $\Reals^3$ by a set $H$ of three planes such that each open cell in the
arrangement $\A(H)$ is intersected by as few lines from~$L$ as possible. We study
such partitions in three settings, depending on the type of splitting planes that
we allow. We obtain the following results.
\begin{itemize}
\item There are sets $L$ of $n$ axis-parallel lines such that, for any set $H$ of three
      splitting planes, there is an open cell in $\A(H)$ 
      that intersects at least~$\floor{n/3}-1 \approx  \frac{1}{3}n$ lines.
\item If we require the splitting planes to be axis-parallel, then there are 
      sets $L$ of $n$ axis-parallel lines such that, for any set $H$ of three
      splitting planes, there is an open cell in~$\A(H)$ that
      intersects at least~$\frac{3}{2}\floor{n/4}-1 \approx \left( \frac{1}{3}+\frac{1}{24}\right) n$ lines.
      
      Furthermore, for any set $L$ of $n$ axis-parallel lines, there exists a set $H$ of
      three axis-parallel splitting planes such that each open cell in $\A(H)$ 
      intersects at most $\frac{7}{18} n =  \left( \frac{1}{3}+\frac{1}{18}\right) n$ lines.
\item For any set $L$ of $n$ axis-parallel lines, there exists a set $H$ of
      three axis-parallel and mutually orthogonal splitting planes, such that each open cell in $\A(H)$ 
      intersects at most $\ceil{\frac{5}{12} n} \approx  \left( \frac{1}{3}+\frac{1}{12}\right) n$ lines.
\end{itemize}
\end{abstract}
\section{Introduction}
Partitioning problems of point sets in $\Reals^d$ have been studied extensively. For instance, the famous Ham-Sandwich Theorem states that,
given $d$ finite point sets in $\Reals^d$, there exists a hyperplane that bisects each of the sets, in the sense of having at most half of the set in each of its two open halfspaces.  Another well-known result is that for any set of $n$ points in the plane, there are two lines that partition the plane into four open cells that each contain at most $\floor{n/4}$ points (a \emph{4-partition}).  The latter result has several stronger forms, where one can specify the orientation of one of the lines, or that the two lines be orthogonal to each other~\cite{courant1942mathematics} (but not both).  The 4-partition question naturally generalizes to the problem of $2^d$-partitioning of a point set in~$\Reals^d$ by $d$~hyperplanes.  Such a triple of planes indeed always exists in $\Reals^3$~\cite{H66, Yao89}, in fact, with the orientation of one of the planes prespecified.  Alternatively, one of the partitioning planes can be required to be perpendicular to the other two~\cite{BK}.  It is known that $2^d$-partition does not always exist in $d>4$~\cite{avis1984}.  The case $d = 4$ remains stubbornly open. 
Results on other partitioning problems for finite point sets can be found in the surveys by Kaneko and Kano~\cite{kaneko2003} and by Kano and Urrutia~\cite{ku21}.

Similar theorems have been obtained for equipartitioning continuous measures. 
For example, the Ham-Sandwich theorem~\cite{roldan2021survey} is traditionally stated in continuous setting: Given $d$ finite absolutely continuous measures in $\Reals^d$, there exists a hyperplane that bisects each measure, in the sense of having half of the mass of each measure on each side; see \cite{BZ04} for some early history of the theorem. 
The $2^d$-partition problem mentioned above was originally asked by Gr{\"u}nbaum in 1960 \cite{grunbaum1960} for measures, as well; for a survey, see~\cite{blagojevic2018topology}.  For an overview of equipartitioning problems, see \cite{roldan2021survey,vzivaljevic2017}. 
\medskip

In this paper we are interested in the following question: given a set $L$ of $n$~lines,
partition the space into open cells such that each cell intersects only few lines. This problem has been studied extensively,
in the context of cuttings and polynomial partitions, but these works typically focus on 
asymptotic results that use a (possibly constant but) large number  of partitioning planes, or the zero set of a (constant but) large degree polynomial.  For example, a classical result on \emph{cuttings} in the plane~\cite{chazelle2018cuttings} is that, for any choice of parameter $r$, $1 \leq r \leq n$, there exists a tiling of the plane by $O(r^2)$ trapezoids so that each open trapezoid meets $n/r$ of the lines of~$L$.  In a similar spirit, a result of Guth~\cite{Guth} states that, for any degree $D>1$, there exists a non-zero bivariate polynomial $f$ of degree at most~$D$, such that the removal of its zero set $Z(f)$ from the plane produces $O(D^2)$ open connected sets, each meeting at most $n/D$ of the lines of $L$.  Analogous results are known for higher dimensions.

Another variant that has been studied is to partition $\Reals^3$ recursively using planes, until each cell meets $O(1)$ of the input objects. This results in a so-called binary space partition (BSP) of the objects. It has been shown that any set of lines (or disjoint triangles) admits a BSP of size $O(n^2)$~\cite{DBLP:journals/dcg/PatersonY90}, and any set of  axis-parallel lines (or disjoint axis-parallel rectangles) admits a BSP of size $O(n^{3/2})$~\cite{Paterson-Yao}. Both bounds are tight in the worst-case.

In contrast to the above settings, we are interested in what can be achieved with a very small number of planes, similar
to the results on Ham-Sandwich cuts and equipartitions. In particular, we are interested
in partitions for a set $L$ of $n$ lines in $\Reals^3$ that use only three planes.
More precisely, we want to partition $\Reals^3$ using a set $H$ of three planes such that 
each open cell in the arrangement $\A(H)$ meets only few of the lines from~$L$. Note that if the 
planes of $H$ are in general position, each such cell is an ``octant.''

We are not aware of any work directly addressing this question in three or higher dimensions, 
for unrestricted sets of lines.  The two-dimensional case was settled in \cite[Lemma 7]{MR2066580}: 
for any set of $n$ lines in the plane in general position, there exists a pair of lines with the 
property that each open quadrant formed by them is met by at most $\floor{n/4}$ of the lines.
In this initial study of the problem, we consider the case of axis-parallel lines in $\Reals^3$.

\paragraph{Problem statement and results.}
Let $L$ be a set of $n$ axis-parallel lines in $\Reals^3$.  
We would like to partition the space by three planes $h_1,h_2,h_3$, so that each of the open 
cells of $\A(H)$ meets as few lines of $L$ as possible, where $H\coloneqq \{h_1,h_2,h_2\}$.
We consider three variants, of increasing generality. The variants depend on the orientation of the
planes used in the partitioning. To this end we define a plane to be \emph{axis-parallel} if it
is parallel to two of the main coordinate axes, we define it to be \emph{semi-tilted} if 
it is parallel to exactly one of the main coordinate axis, and we define it to be
\emph{tilted} if not parallel to any coordinate axis. 
We will consider the following three types of partitionings.
\begin{itemize}[nosep]
\item The set $H$ must consist of three mutually orthogonal axis-parallel planes;  let $\gperp$ be the corresponding function.
\item The planes in $H$ must all be axis-parallel ($\gpar$), but need not be pairwise orthogonal.
\item The planes in $H$ can be chosen arbitrarily  ($g$), that is, $H$ can also use semi-tilted or tilted planes.
\end{itemize}
For a set $L$ of lines, we let $g(L)$ be the minimum integer $k$, such that there is a set~$H$ of three planes in $\Reals^3$ with every open cell of $\A(H)$ meeting at most $k$ lines of $L$.  Define $\gpar(L)$ and $\gperp(L)$ analogously.
Now define $g(n)\mydef\max_{|L|=n} g(L)$, with the maximum taken over all sets of axis-parallel lines of size $n$. 
Define $\gpar(n)$ and~$\gperp(n)$ similarly.
Clearly, $g(n)  \le \gpar (n) \le \gperp (n)$.  
Table~\ref{fig:results-table} summarizes our results.

\medskip
The rest of the paper is organized as follows.  In Section~\ref{sec:three-bundles} we introduce our simplest ``three bundle'' construction, which implies lower bounds on all three functions and illustrates some of the methods used in the rest of the paper.  We prove stronger lower bounds on $\gperp(n)$ and $\gpar(n)$ in Section~\ref{sec:lower-bounds}.
In Section~\ref{sec:upper-bounds} we present constructive upper bounds on $\gpar$ and $\gperp$, and consequently also on $g$.
\begin{table}
  \renewcommand{\arraystretch}{1.25} 
  \centering
  \begin{tabular}{l|ll|ll}
  Function & Lower bound & Reference & Upper bound & Reference \\
  \hline
  $\gperp(n)$ & 
  $\frac{3}{2}\left\lfloor \frac{n}{4} \right\rfloor \approx \left( \frac{1}{3} + \frac{1}{24} \right)n$
  & Theorem~\ref{th:unbalancedlb} & 
  $\left\lceil \frac{5}{12}n \right\rceil \approx \left( \frac{1}{3} + \frac{1}{12} \right)n$ & Theorem~\ref{thm:512upper-orig} \\[2mm]
  $\gpar(n)$ &   $\frac{3}{2}\left\lfloor \frac{n}{4} \right\rfloor-1 \approx \left( \frac{1}{3} + \frac{1}{24} \right)n$ & Theorem~\ref{th:unbalancedlb} & $\left\lfloor \frac{7}{18}n \right\rfloor \approx \left( \frac{1}{3} + \frac{1}{18} \right)n$ & Theorem~\ref{th:7nover18} \\[2mm]
  $g(n)$ & $\left\lfloor \frac{1}{3}n \right\rfloor -1  \approx \frac{1}{3}n$ & Corollary~\ref{cor:lb-general} &
  $\left\lfloor \frac{7}{18}n \right\rfloor \approx \left( \frac{1}{3} + \frac{1}{18} \right)n$ & Theorem~\ref{th:7nover18} \\
  \end{tabular}
  \caption{Summary of general results}
  \label{fig:results-table}
\end{table}

\medskip

We will use the following convention throughout the paper: we say that a line $\ell$~\emph{crosses} a plane~$h$ in~$\Reals^3$, if $\ell$~meets~$h$, but is not contained in it.

\section{Three bundles}
\label{sec:three-bundles}
As a warm-up exercise we analyze a simple and natural configuration of lines.
The analysis will give a first lower bound on the functions $\gperp(n)$, $\gpar(n)$, and $g(n)$.
In fact, the bound we obtain for $g(n)$ is the best bound we have for this case,
where the splitting planes can be completely arbitrary. 
The configuration consists of three bundles of $n/3$ lines, where $n$ is a multiple of~3;
see Fig.~\ref{fig:bundles-construction} for an illustration.
\begin{figure}
\begin{center}
\includegraphics{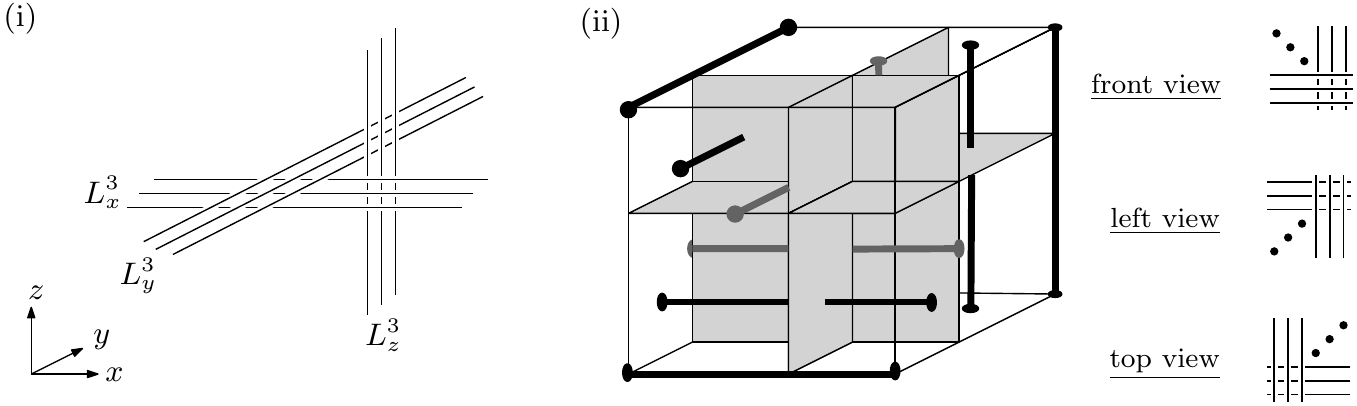}
\end{center}
\caption{(i)~Rough illustration of the set $\Ltb$ for the case $n=9$. 
         (ii)~A more accurate illustration of~$\Ltb$, which also shows three splitting
         planes that partition space into eight cells, each containing at most two lines.
         The dark grey lines are contained in a splitting plane.
         The front, left, and top views show that the construction is symmetric
         with respect to the axes, up to reversing the directions of some axes.}
\label{fig:bundles-construction}
\end{figure}
We call this configuration the \emph{three-bundle configuration}, and denote it by~$\Ltb(n)$, or just $\Ltb$ for short. 
It is defined as $\Ltb \mydef \Ltbx \cup \Ltby \cup \Ltbz$, where 
\begin{align*}
\Ltbx & \mydef \{ (0,i,i) + \lambda (1,0,0): 1\leq i \leq n/3 \},\\
\Ltby & \mydef \{ (i,0,2n/3+1-i) + \lambda (0,1,0):  1\leq i \leq n/3 \}, \text{ and} \\
\Ltbz & \mydef \{ (i,i,0) + \lambda (0,0,1): n/3 < i \leq 2n/3 \}.
\end{align*}
To be able to handle the case of tilted splitting planes, we need to slightly shift the lines
in~$\Ltb$, so that no plane (titled or otherwise) contains three or more lines.
For example, $\Ltbx$ is actually defined as  
$\Ltbx \mydef \{ (0,i+\eps_i,i) + \lambda (1,0,0): 1\leq i \leq n/3 \}$,
where the $\eps_i$'s are small, distinct real numbers that guarantee that
no plane contains more than two of the lines from~$\Ltbx$. 
The sets $\Ltby$ and $\Ltbz$ are shifted similarly. Thus no plane
can contain more than two lines from the same bundle. 
Note that a plane cannot contain two lines from different bundles either.
With a slight abuse of notation we will still
denote the shifted sets by $\Ltbx$, $\Ltby$, and $\Ltbz$. For simplicity,
we will refrain from showing the perturbations in our figures.
\medskip

It is straightforward to verify that $\gperp(\Ltb) \leq n/3 -1$. Indeed, if we take
$h_1 \mydef \{x=n/3+1\}$, and $h_2 \mydef \{y=n/3\}$, and $h_3 \mydef \{z=n/3+1\}$,
then six of the eight cells in the resulting arrangement $\A(H)$ are
intersected\footnote{Recall that we consider the cells to be open,
so lines fully contained in a splitting plane do not contribute to the counts.}
by $n/3-1$ lines (all coming from a single bundle)  and the remaining
two cells are not intersected at all; see Fig.~\ref{fig:bundles-construction}(ii).
The main result of this section is that this is tight: even if one
is allowed to use tilted splitting planes, it is not possible to ensure
that all cells in the partitioning are intersected by strictly fewer than~$n/3-1$ lines.
This gives the following theorem.
\begin{theorem}\label{thm:3-bundles}
Let $\Ltb(n)$ be the three-bundle configuration on $n$ lines, where $n$ is divisible by $3$, 
as defined above. Then 
\begin{itemize}[nosep]
\item $\gperp(\Ltb(n)) = \gpar(\Ltb(n)) = g(\Ltb(n)) = n/3-1$ for $n\geq 9$,
\item $\gperp(\Ltb(n)) = \gpar(\Ltb(n)) =1 $ and $g(\Ltb(n)) = 0$ for $n=6$,
\item $\gperp(\Ltb(n)) = \gpar(\Ltb(n)) = g(\Ltb(n)) = 0$ for $n=3$.
\end{itemize}
\end{theorem}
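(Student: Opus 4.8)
The plan is to obtain the upper bounds essentially for free and then put all the work into the matching lower bound. Write $m\mydef n/3$. For $n\ge9$ the bound $\gperp(\Ltb(n))\le m-1$ is already displayed before the theorem (take $\{x=m+1\},\{y=m\},\{z=m+1\}$), and since $g\le\gpar\le\gperp$ this gives the upper bounds for all three quantities. For $n=3$ one places the single line of each bundle inside its own axis-parallel plane, so $g=\gpar=\gperp=0$. For $n=6$ one places the two (parallel) lines of each bundle inside the plane they span; these three planes are semi-tilted and pairwise distinct, so all six lines lie in planes and $g(\Ltb(6))=0$; on the other hand an axis-parallel plane contains at most one line of $\Ltb$ (two parallel lines of a bundle span a semi-tilted plane, and no plane meets two bundles), so at least $6-3=3$ lines lie in open cells and $\gpar(\Ltb(6))\ge1$, while an explicit triple of mutually orthogonal axis-parallel planes — each containing one suitable line of a different bundle, the three remaining lines separated using that the coordinate ranges $[1,m]$ and $[m+1,2m]$ of the bundles are disjoint — gives $\gperp(\Ltb(6))\le1$.

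For the main case it remains, by the same monotonicity, to prove $g(\Ltb(n))\ge m-1$ for $n\ge9$: for every set $H$ of three planes, some open cell of $\A(H)$ meets at least $m-1$ lines. Fix $H$; for $a\in\{x,y,z\}$ let $P_a\subseteq H$ be the planes parallel to direction $a$. No plane is parallel to all three axes, so $\sum_a|P_a|\le6$, and the perturbation guarantees that each plane contains at most two lines of $\Ltb$, all from a single bundle. First dispose of the case that some $P_a$ is empty: then all three planes cross every line of the bundle $B_a$, so, orienting each plane so that its ``$+$'' side points toward $+\infty$ along axis $a$, every one of the $m$ lines of $B_a$ runs from the all-``$-$'' cell to the all-``$+$'' cell; in particular the all-``$-$'' cell meets all $m$ lines, which is more than enough.

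So assume $|P_a|\ge1$ for all $a$, and combine two arguments. The first is a counting bound: a line crosses each plane at most once, so the $k+1$ pieces into which $k$ crossing planes cut a line have pairwise distinct sign patterns; hence every line of $B_a$ not contained in a plane meets exactly $4-|P_a|$ cells, and dividing the sum of these counts over all lines by the number (at most $8$) of cells yields an open cell meeting at least $\tfrac18\sum_a(m-e_a)(4-|P_a|)$ lines, where $e_a\le2|P_a|$ counts the lines of $B_a$ inside planes. This already gives $\ge m-1$ whenever $\sum_a|P_a|$ is small, in particular for the all-semi-tilted profile $|P_x|=|P_y|=|P_z|=1$ once $n$ is not too tiny. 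The second argument handles the ``dense'' profiles, above all the extremal case of three mutually orthogonal axis-parallel planes ($|P_x|=|P_y|=|P_z|=2$): the two bundles crossed by any one axis-aligned splitting plane occupy disjoint coordinate ranges ($[1,m]$ and $[m+1,2m]$) in that direction, so a single plane can break at most one bundle into more than two groups, and therefore at least two of the three bundles have $\ge\lceil(m-2)/2\rceil$ of their lines confined to a common pair of cells. One then merges two such heavy groups into a single cell by choosing, for each of them, which cell of its pair to use — again exploiting the disjointness of $[1,m]$ and $[m+1,2m]$. Finally, a bounded number of genuinely small configurations are checked directly.

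The crux is this dense case. One must show, for every placement and orientation of the three planes, that two of the heavy groups can actually be forced into one common cell — which comes down to matching the fixed sign of one group against the free coordinate of another, and is exactly where the disjointness of the coordinate ranges of $\Ltbx$, $\Ltby$, $\Ltbz$ is essential — and that the bound obtained is $m-1$ rather than merely $m-O(1)$; squeezing out this last unit, while keeping the counting argument tight (which is precisely why one needs every plane to contain at most two lines, i.e.\ why the perturbation is introduced), is the delicate part.
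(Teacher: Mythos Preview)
Your upper bounds and small cases are fine, and your fragment-counting inequality is the same device as the paper's Eq.~(1). The genuine gap is in what happens when counting fails.

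You organise the non-counting cases by the profile $(|P_x|,|P_y|,|P_z|)$ and sketch a ``dense'' argument only for the fully orthogonal axis-parallel case $(2,2,2)$. But counting also fails for the profile $(1,2,2)$ (up to permutation): two axis-parallel planes, say $h_1\perp x$ and $h_2\perp y$, together with one semi-tilted plane $h_3$ parallel to the $y$-axis. Here the total incidence count is $7m-O(1)$ over eight cells, which is strictly below $m-1$ for every $m$, so the averaging argument cannot close this case. Your dense argument, however, relies on each splitting plane being axis-aligned so that the two bundles it separates occupy disjoint coordinate ranges; with $h_3$ semi-tilted this property is simply absent, and nothing in your plan addresses it. This is exactly the case the paper singles out as ``the only remaining, and most difficult'': it is handled there by a chain of geometric claims pinning down where each of $h_1,h_2,h_3$ must sit relative to the bundles, followed by a two-subcase analysis of the slope of $h_3$ in the $xz$-projection, ending in a cyclic contradiction $j>k+2>i+1>j+1$. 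Similarly, the non-orthogonal axis-parallel profile (two parallel planes) is not covered by your coarse count divided by eight, and your dense argument as stated does not apply to it either; the paper treats it separately in Lemma~4.

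Even for the $(2,2,2)$ orthogonal case your argument is incomplete: merging two heavy groups of size $\lceil(m-2)/2\rceil$ yields only $m-2$ when $m$ is even, and you explicitly postpone ``squeezing out this last unit''. The paper's proof here (Lemma~3) is quite different from your heavy-group picture: it works in the $xy$-projection, forces the relative position of the planes and the bundles, and then reads off three inequalities $i>j$, $k>i$, $j>k$ from three specific octants to obtain a contradiction. So the part you label as delicate and defer is in fact the entire substance of the lower bound in precisely those cases where your counting is insufficient.
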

Theorem~\ref{thm:3-bundles} immediately gives the following corollary.
The cases where $n=3$ and $n=6$ are easy to verify.  In the remainder of this section we assume $n\geq 9$. 
\begin{corollary}
  \label{cor:lb-general}
  For $n\geq 9$, $g(n) \geq \floor{n/3} -1$.
\end{corollary}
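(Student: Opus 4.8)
The plan is to obtain the corollary from Theorem~\ref{thm:3-bundles} by a simple monotonicity-and-padding argument, the only gap being that Theorem~\ref{thm:3-bundles} is stated for $n$ divisible by~$3$ while the corollary is claimed for all $n\ge 9$.

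First I would record the immediate observation that $g$ is monotone under inclusion: if $L\subseteq L'$, then $g(L)\le g(L')$. Indeed, fix a set $H$ of three planes realising $g(L')$, so that every open cell of $\A(H)$ meets at most $g(L')$ lines of $L'$; since $L\subseteq L'$, every open cell then meets at most $g(L')$ lines of $L$ as well, whence $g(L)\le g(L')$ by minimality.

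Next, given $n\ge 9$, set $m\mydef 3\floor{n/3}$. Then $m$ is a multiple of $3$ and $m\ge 9$, so Theorem~\ref{thm:3-bundles} applies and yields $g(\Ltb(m)) = m/3-1 = \floor{n/3}-1$. Form a set $L$ of $n$ axis-parallel lines by adding to $\Ltb(m)$ an arbitrary collection of $n-m\in\{0,1,2\}$ axis-parallel lines (for concreteness, far from the bundles; their precise location is irrelevant). By the monotonicity observation, $g(L)\ge g(\Ltb(m)) = \floor{n/3}-1$, and since $g(n)=\max_{|L'|=n} g(L')$ we conclude $g(n)\ge\floor{n/3}-1$.

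There is essentially no obstacle here: all the substance lies in Theorem~\ref{thm:3-bundles}, and the corollary is merely the extension from multiples of $3$ to all $n\ge 9$. The only point requiring (trivial) care is that the padding lines must themselves be axis-parallel, so that $L$ is an admissible configuration in the definition of $g(n)$.
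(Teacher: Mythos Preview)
Your argument is correct and is exactly the standard monotonicity-and-padding step that the paper leaves implicit: the paper simply states that the corollary ``immediately'' follows from Theorem~\ref{thm:3-bundles} without spelling out the passage from multiples of~$3$ to arbitrary~$n\ge 9$. Your write-up fills in precisely this gap in the intended way.
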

Theorem~\ref{thm:3-bundles} also implies that $\gperp(n) \geq  \floor{n/3} -1$
and $\gpar(n) \geq \floor{n/3} -1$, of course, but in later sections we will
provide stronger lower bounds on $\gperp(n)$ and $\gpar(n)$; see Theorem~\ref{th:unbalancedlb}.
\medskip

Since we already argued that $\gperp(\Ltb) \leq n/3-1$, and we have $\gperp(\Ltb) \geq \gpar(\Ltb) \geq g(\Ltb)$,
it suffices to prove that $g(\Ltb)\geq n/3-1$. It will be convenient, however, to 
first prove this for $\gperp(\Ltb)$, then extend the proof to $\gpar(\Ltb)$,
and finally extend it to $g(\Ltb)$.
\begin{lemma}\label{lem:3b-lb-perp}
$\gperp(\Ltb(n)) \geq n/3-1$. 
\end{lemma}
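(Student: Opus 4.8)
The plan is to pick an arbitrary triple $H=\{h_x,h_y,h_z\}$ of mutually orthogonal axis-parallel planes, which necessarily has the form $h_x=\{x=a\}$, $h_y=\{y=b\}$, $h_z=\{z=c\}$, and to show that some open octant of $\A(H)$ meets at least $n/3-1$ lines. Index the eight octants by sign vectors $(\sigma,\tau,\rho)\in\{-,+\}^3$, where $\sigma$ tells whether $x<a$ or $x>a$ in the octant, and so on. A line of $\Ltbx$ is parallel to the $x$-axis, so it crosses $h_x$ and lies strictly on one side of each of $h_y$ and $h_z$ (unless it is contained in one of them), hence it meets exactly the two octants that agree with it on $(\tau,\rho)$. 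Writing $A_{\tau\rho}$ for the number of lines of $\Ltbx$ that lie in open octants with $(y,z)$-sign pattern $(\tau,\rho)$, and defining $B_{\sigma\rho}$ and $C_{\sigma\tau}$ cyclically for $\Ltby$ and $\Ltbz$, the number of lines of $\Ltb$ meeting the open octant $(\sigma,\tau,\rho)$ is exactly $A_{\tau\rho}+B_{\sigma\rho}+C_{\sigma\tau}$; the goal is to make one of these eight sums reach $n/3-1$.

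The geometry enters through two facts. First, the projection of $\Ltbx$ to the $yz$-plane is an (almost) monotone increasing staircase from near $(1,1)$ to near $(n/3,n/3)$, so the lines $\{y=b\}$ and $\{z=c\}$ split it into at most three nonempty pieces; in particular at most one of $A_{+-},A_{-+}$ is nonzero. Likewise $\Ltby$ projects to a monotone decreasing staircase in the $xz$-plane and $\Ltbz$ to a monotone increasing staircase in the $xy$-plane. Second, the three bundles occupy pairwise disjoint coordinate intervals in a cyclic pattern: $\Ltby$ sits at small $x$-coordinates and $\Ltbz$ at large ones, $\Ltbx$ at small $y$-coordinates and $\Ltbz$ at large ones, $\Ltbx$ at small $z$-coordinates and $\Ltby$ at large ones. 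Consequently $h_x$ can have lines strictly on both sides of at most one of $\Ltby,\Ltbz$; say $h_x$ \emph{splits} that bundle, and cyclically for $h_y$ and $h_z$. Call a bundle \emph{unsplit} if neither of the two planes orthogonal to its lines splits it.

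Suppose first that some bundle, say $\Ltbx$, is unsplit. Then all but at most two of its $n/3$ lines share a common $(y,z)$-sign pattern $(\tau_0,\rho_0)$, so each octant $(\pm,\tau_0,\rho_0)$ meets at least $n/3-2$ lines of $\Ltbx$. If that count is actually $\geq n/3-1$ we are done; otherwise exactly two lines of $\Ltbx$ lie on $H$, and inspecting the staircase shows this can happen only when $h_y$ grazes one extreme end of the $y$-range of $\Ltbx$ and $h_z$ grazes the opposite extreme end of its $z$-range. But then $b$ lies entirely to one side of the $y$-range of $\Ltbz$ and $c$ entirely to one side of the $z$-range of $\Ltby$, so $h_y$ does not split $\Ltbz$ and $h_z$ does not split $\Ltby$; since in addition $h_x$ splits at most one of $\Ltby,\Ltbz$, one of these two bundles is unsplit by both of its orthogonal planes and, those planes being far away, has no line on $H$, so all $n/3$ of its lines fall into one octant column and we are done with at least $n/3$.

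Now suppose every bundle is split by one of its two orthogonal planes. Using the disjoint-interval property one checks that this forces exactly one of two cyclic patterns; consider the pattern in which $h_x$ splits $\Ltby$, $h_y$ splits $\Ltbz$, and $h_z$ splits $\Ltbx$ (the other pattern is handled by a symmetric computation). Then $b$ lies above the whole $y$-range of $\Ltbx$, $a$ below the whole $x$-range of $\Ltbz$, and $c$ below the whole $z$-range of $\Ltby$, so $A$ is supported on $\tau=-$, $C$ on $\sigma=+$, and $B$ on $\rho=+$. Let $v$, $u$, and $w$ be the numbers of lines of $\Ltbx$ below $h_z$, of $\Ltby$ left of $h_x$, and of $\Ltbz$ below $h_y$, respectively. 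A direct computation of the octant counts shows that $(-,-,+)$, $(+,-,-)$, and $(+,+,+)$ carry $n/3+u-v$, $v+w$, and $2n/3-u-w$ lines (each shifted by at most $1$ if a split line happens to lie exactly on a plane). If every octant met at most $n/3-2$ lines, these would give $v\geq u+2$, $v+w\leq n/3-2$, and $u+w\geq n/3+2$; subtracting the last from the second yields $v-u\leq-4$, contradicting $v\geq u+2$ even after the $\pm1$ shifts. Hence some octant meets at least $n/3-1$ lines. The step I expect to be most delicate is precisely this degenerate bookkeeping — keeping track of the lines that may lie on $H$ so that the bound comes out as the tight value $n/3-1$ rather than something weaker — whereas the linear-arithmetic core above is routine.
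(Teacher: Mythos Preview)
Your argument is correct and arrives at the same cyclic contradiction as the paper, but the organization differs. The paper works directly in the $xy$-projection: it first bounds the number of $\Ltbz$-points in the \NE~quadrant, uses this (and a symmetry) to locate $h_1,h_2$ roughly, then introduces parameters $i,j,k$ counting how each plane divides one bundle and reads off $i>j$, $k>i$, $j>k$ from the octants \NEb, \NWt, \SWb. Your split/unsplit dichotomy is a more systematic route to the same endpoint---your Case~2 is exactly the configuration the paper lands in, and your three octants $(-,-,+),(+,-,-),(+,+,+)$ yield an equivalent cycle in $u,v,w$---while the paper's direct argument is shorter because it never isolates an ``unsplit'' case. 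One small correction in your Case~1: after passing to a second unsplit bundle (say $\Ltby$) you claim it has no line on $H$, but $h_x$ could still sit at an endpoint of $\Ltby$'s $x$-range and contain one line; this drops your count from $n/3$ to $n/3-1$, which is still exactly what you need. The $\pm 1$ bookkeeping you flagged in Case~2 also goes through: writing $\delta_A,\delta_B,\delta_C\in\{0,1\}$ for lines lying on the respective planes, the three octant bounds give $v-u\geq 2-\delta_A\geq 1$ and $v-u\leq -4+\delta_B+\delta_C\leq -2$, still a contradiction.
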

\begin{proof}
Suppose for a contradiction that there is a set $H = \{h_1,h_2,h_3\}$ of mutually 
orthogonal axis-parallel planes such that each cell in the arrangement $\A(H)$ meets 
fewer than $n/3-1$~lines from~$\Ltb$. Let $h_1, h_2$, and $h_3$ be the planes orthogonal 
to the $x$-, $y$-, and $z$-axis, respectively.
\begin{figure}
\centering
\includegraphics[scale=0.9]{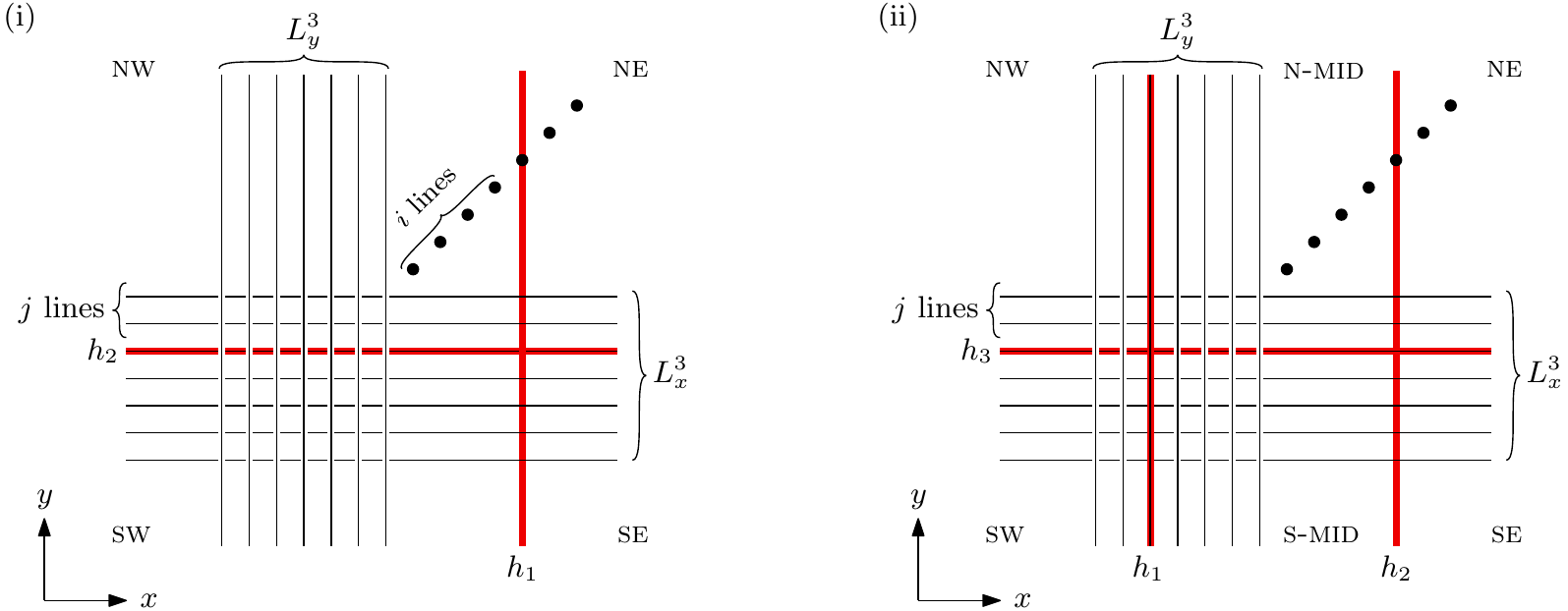}
\caption{Illustration for the proof of Lemmas~\ref{lem:3b-lb-perp} and~\ref{lem:3b-lb-par},
         showing the projection onto the $xy$-plane, where above/below crossings are drawn as seen from $z=+\infty$.}
\label{fig:lemma3and4}
\end{figure}

Consider the projection of $\Ltb$ and the planes $h_1$ and $h_2$ onto the $xy$-plane; refer to Fig.~\ref{fig:lemma3and4}(i).
The lines from~$\Ltbx$ and~$\Ltby$ appear as lines
in the projection. The planes $h_1$ and $h_2$ appear as lines as well, and they partition 
the $xy$-plane into four quadrants, which we label as \NE, \SE, \SW, and \NW in the natural way.
The lines from $\Ltbz$ appear as points in the projection. Below we will, with a slight abuse 
of terminology, sometimes speak of ``points from $\Ltbz$.''

Note that the \NE-quadrant contains at most $n/3-2$ points from $\Ltbz$; 
otherwise there would be two octants in $\A(H)$ meeting at least $n/3-1$ lines. 
Hence, at least one of the following two cases occurs
\begin{enumerate}[(a),nosep]
\item at least one point from $\Ltbz$ lies to the left of $h_1$, or
\item at least one point from $\Ltbz$ lies below $h_2$. 
\end{enumerate}
Due to symmetry we can assume without loss of generality that case~(a) holds,
which implies that all lines from $\Ltby$ lie to the left of $h_1$.
Now observe that the \SW-quadrant meets at most $2n/3-3$ lines from $\Ltbx \cup \Ltby$; 
otherwise one of the two octants in $\A(H)$ that correspond to the \SW-quadrant---these octants 
are generated when the $z$-vertical column corresponding to 
the \SW-quadrant is split by the splitting plane~$h_3$---will
meet at least $n/3-1$ lines. (Note that the plane $h_3$ can contain at most one line from $\Ltbx \cup \Ltby$.)
Since we already observed that all lines from $\Ltby$ lie to the left of $h_1$, 
this implies that at least two lines from $\Ltbx$ lie above~$h_2$. 
Hence, we can assume that the projection looks like the one depicted in Fig.~\ref{fig:lemma3and4}(i).
    
Let $i$ be the number of lines from $\Ltbz$ lying (strictly) to the left of $h_1$,
and let $j$ be the number of lines from $\Ltbx$ lying (strictly) above $h_2$;
by the above arguments we have $i\geq 1$ and $j\geq 2$.
Consider the remaining cutting plane~$h_3$, which is parallel to the $xy$-plane, and let $k$ be the number of lines from $\Ltby$ lying (strictly) below~$h_3$.
We have $k\geq 1$ because otherwise the
\NWt octant of $\A(H)$---this is the cell that is above $h_3$ and whose
projection onto the $xy$-plane is the \NW~quadrant---will meet $i$ lines of $\Ltbz$ and at least
$n/3-1$ lines of $\Ltbx$, which is a contradiction. 
Since $k\geq 1$ we also know that all $j$~lines from $\Ltbx$ must lie below $h_3$,
since these lines lie below the lines from $\Ltby$.    
We can thus derive the following. 
\begin{enumerate}[(i),nosep]
\item The \NEb octant meets $n/3-i-1$ lines of $\Ltbz$ and $j$ lines
    of $\Ltbx$. Since we assumed all cells intersect fewer than $n/3-1$ lines,
    we thus have $n/3-i-1+j<n/3-1$, which implies $i>j$. 
\item The \NWt octant meets $n/3-k-1$ lines of $\Ltby$ and $i$ lines of $\Ltbz$. 
   For this octant to intersect fewer than $n/3-1$ lines, we must thus have  $k>i$. 
\item The \SWb octant meets $k$ lines of $\Ltby$ and $n/3-j-1$ lines of $\Ltbx$. 
   For this octant to intersect fewer than $n/3-1$ lines, we must thus have $k<j$.
\end{enumerate}
But (i) and (ii) together imply $k>j$, which contradicts~(iii), thus finishing the proof of the lemma.
\end{proof}
We now consider the case where the planes, though still axis-parallel, need not be
mutually orthogonal.
\begin{lemma}\label{lem:3b-lb-par}
$\gpar(\Ltb(n)) \geq n/3-1$.
\end{lemma}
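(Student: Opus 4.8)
The plan is to reduce the case of three axis-parallel but not necessarily orthogonal planes to the orthogonal case handled in Lemma~\ref{lem:3b-lb-perp}. First I would observe that among three axis-parallel planes $h_1, h_2, h_3$, at least two must be parallel to the same coordinate axis, or all three are pairwise orthogonal. If all three are pairwise orthogonal we are done by Lemma~\ref{lem:3b-lb-perp}, so I may assume that at least two planes, say $h_1$ and $h_2$, are orthogonal to the same axis --- by symmetry of the construction (which is symmetric under permuting the roles of the three axes, up to reversing directions), say both are orthogonal to the $z$-axis, i.e. $h_1 = \{z = c_1\}$ and $h_2 = \{z = c_2\}$ with $c_1 < c_2$. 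The third plane $h_3$ is axis-parallel; it is orthogonal to the $x$-axis, the $y$-axis, or the $z$-axis.

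The key structural point is that two parallel planes together with one more plane produce at most $2 \cdot 3 = 6$ cells, and more importantly they slice space into three horizontal slabs $S_{\mathrm{bot}} = \{z < c_1\}$, $S_{\mathrm{mid}} = \{c_1 < z < c_2\}$, $S_{\mathrm{top}} = \{z > c_2\}$, each of which is then cut by $h_3$ into two cells. I would argue that this is too weak a partition: one of the slabs must receive many lines of $\Ltbz$, since the $n/3$ lines of $\Ltbz$ are $z$-vertical and hence each lies entirely in one slab (none can be contained in $h_1$ or $h_2$ because those would contain only finitely many lines of $\Ltb$, but more to the point a $z$-vertical line crosses a horizontal plane in a single point). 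By pigeonhole, some slab contains at least $\lceil (n/3)/3 \rceil$ of these vertical lines --- that bound alone is not enough, so the real work is to also account for the $\Ltbx$ and $\Ltby$ lines. Since $\Ltbx$ and $\Ltby$ lines are horizontal (parallel to the $x$- and $y$-axes respectively), each such line lies entirely within one of the three slabs as well. So in fact every line of $\Ltb$ lives in exactly one slab, and the three slabs partition all $n$ lines into three groups; some slab gets at least $\lceil n/3 \rceil$ lines. Within that slab, $h_3$ splits it into two cells, and $h_3$ can contain at most one line of $\Ltb$ (by the perturbation), so one of the two cells meets at least $\lceil (\lceil n/3\rceil - 1)/2 \rceil$ lines --- still not obviously $\geq n/3 - 1$.

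So the crude pigeonhole is insufficient, and the argument must be more refined: I expect the real proof mimics the case analysis of Lemma~\ref{lem:3b-lb-perp} but with the roles of the planes adjusted. Concretely, I would set up the projection onto the $xy$-plane as before; now $h_1$ and $h_2$ both project to nothing useful (they are horizontal), so instead the partition of the $xy$-plane is governed only by $h_3$ if $h_3$ is vertical, or by nothing if $h_3$ is horizontal. If $h_3$ is also horizontal, all three planes are parallel and we have only four cells total (three slabs, with the middle one... no, still three slabs — four cells requires three parallel planes giving four slabs), wait: three parallel horizontal planes give four slabs, each an unbounded cell, and since every line of $\Ltb$ lies in one slab and there are only finitely many lines contained in any horizontal plane (none, generically), the $n$ lines split among four slabs, so some slab meets $\geq \lceil n/4 \rceil \geq n/3 - 1$ lines for $n \geq 9$; checking $\lceil n/4 \rceil \geq n/3 - 1$ holds exactly when $n \leq 12$, hmm, so this sub-case needs its own finer argument too, using the geometry of which lines can coexist in a slab. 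If $h_3$ is vertical, say orthogonal to the $x$-axis, then the $xy$-projection has two vertical halfspaces $\{x < c_3\}$, $\{x > c_3\}$; combined with the two horizontal planes we get $2 \times 3 = 6$ cells, and I would run a case analysis on how the $\Ltbz$ points distribute between the two halfplanes and how the $\Ltbx$, $\Ltby$ lines distribute among the three slabs, deriving a system of inequalities analogous to (i)--(iii) that is contradictory.

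The hard part will be the clean bookkeeping in the sub-case where $h_3$ is vertical: one must track, for each of the three slabs and each of the two $x$-halfspaces, how many lines of each bundle fall there, and show that the constraint ``every cell meets fewer than $n/3 - 1$ lines'' forces an impossible chain of inequalities. I expect one can exploit heavily the \emph{diagonal} placement of the bundles --- $\Ltbx$ has its $i$th line at height $z = i$ and $y$-coordinate $\approx i$, $\Ltby$ at $x \approx i$ and $z \approx 2n/3 + 1 - i$, $\Ltbz$ at $x \approx i$, $y \approx i$ for $n/3 < i \leq 2n/3$ --- so that ``lying below plane $h$'' is a \emph{monotone} condition along each bundle, exactly as used in Lemma~\ref{lem:3b-lb-perp}. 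Once that monotonicity is in hand, the inequality-chasing should be mechanical, mirroring the three-inequality contradiction at the end of the previous proof; I would not expect to need any genuinely new idea beyond carefully choosing which slab and which halfspace play the roles of \NEb, \NWt, \SWb. The only mild subtlety is handling the degenerate positions of $h_3$ (coinciding with one of the existing grid values), which the $\eps_i$-perturbation is designed to make harmless.
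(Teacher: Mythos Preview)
Your proposal contains a genuine error and leaves the main case unproved.

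\textbf{The error.} You take the two parallel planes $h_1,h_2$ orthogonal to the $z$-axis and then assert that ``the $n/3$ lines of $\Ltbz$ are $z$-vertical and hence each lies entirely in one slab.'' This is backwards: a $z$-vertical line crosses every horizontal plane, so each line of $\Ltbz$ meets \emph{every} horizontal slab. (You even note in the same sentence that such a line ``crosses a horizontal plane in a single point,'' which contradicts the claim you are making.) Consequently your pigeonhole on slabs is invalid for $\Ltbz$, and the subsequent claim that ``every line of $\Ltb$ lives in exactly one slab'' is false. Incidentally, this observation immediately dispatches the all-parallel sub-case that you leave unresolved: if all three planes are orthogonal to the $z$-axis, then each of the $n/3$ lines of $\Ltbz$ meets all four cells, so every cell already meets $n/3$ lines---no pigeonhole needed.

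\textbf{The gap.} For the remaining case (two parallel planes plus one perpendicular axis-parallel plane) you only outline a strategy (``I would run a case analysis \ldots\ the inequality-chasing should be mechanical'') without carrying it out. Moreover, your chosen coordinate frame---with the parallel planes horizontal---forces you to work in $3$D slabs and then project, which is awkward precisely because the horizontal planes vanish in the $xy$-projection. The paper instead normalizes so that the two parallel planes are orthogonal to the $x$-axis and the third plane is orthogonal to the $y$-axis; then the $xy$-projection shows all three planes as lines, partitioning the plane into six cells (labelled \NE, \NM, \NW, \SE, \SM, \SW), and these six cells \emph{are} the cells of $\A(H)$. A short direct count, tracking how many lines of each bundle meet the middle cell \NM, then yields the contradiction in a few lines---no chain of three inequalities is needed. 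Your monotonicity intuition is correct and is implicitly used, but the actual argument is shorter and more direct than the one you are anticipating.
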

\begin{proof}
Suppose for a contradiction that there is a set $H = \{h_1,h_2,h_3\}$ of three axis-parallel 
planes such that each cell of $\A(H)$ meets fewer than $n/3-1$~lines from~$\Ltb$. 

The case of three mutually orthogonal planes has been handled in Lemma~\ref{lem:3b-lb-perp}.
If $h_1$, $h_2$, and~$h_3$ are all parallel to each other, then clearly there is a cell that 
is intersected by at least $n/3$ lines from $\Ltb$, 
since any line from the bundle orthogonal to the planes intersects all four cells.
Thus the remaining case is when exactly two of the planes are parallel. 

Assume without loss of generality that $h_1$ and $h_2$ are parallel to the $yz$-plane, 
with $h_2$ being to the right of $h_1$, and that $h_3$ is parallel to the $xz$-plane.
Now consider the situation in the projection onto the $xy$-plane, as shown in Fig.~\ref{fig:lemma3and4}(ii). 
There are six cells in the projection, which we label as
\NE, \NM, \NW, \SE, \SM, and \SW. 
If each cell in the projection meets fewer than $n/3-1$ lines, then at least one line of 
$\Ltby$ (in the projection) should be above $h_3$, 
otherwise the cells below~$h_3$ intersect more than $n/3-1$ lines.

Let $j\geq 1$ be the number of lines from $\Ltbx$ above~$h_3$.
Then the number of lines from $\Ltbz$---these are points in the figure---that are to the right of $h_2$ is 
at most~$n/3-j-2$, otherwise the \NE cell meets at least $n/3-1$ lines.
Similarly, the number of lines from $\Ltby$ to the left of $h_1$ must be 
at most~$j-1$, otherwise the \SW cell meets at least 
$(n/3-j-1) +  j = n/3-1$ lines. 
But then we have
\begin{itemize}[nosep]
\item $\mbox{(\# lines from $\Ltbx$ in \NM)} = j$,
\item $\mbox{(\# lines from $\Ltby$ in \NM)} \geq n/3 - (j-1)-1 = n/3 - j$,
\item $\mbox{(\# lines from $\Ltbz$ in \NM)} \geq n/3 - (n/3-j-2) -1 =  j+1$.
\end{itemize}
This brings the total number of lines in the \NM cell to at least $n/3 + j + 1$,
thus contradicting our assumption. \qedhere
\end{proof}
We now turn our attention to the general case, where we are also allowed to use
splitting planes that are tilted or semi-tilted.  (Recall that a semi-tilted plane is parallel to 
exactly one coordinate axis, and a tilted plane is
not parallel to any coordinate axis.)  
\begin{lemma}\label{lem:3b-lb-general}
  For $n\geq 9$, we have $g(\Ltb(n)) \geq n/3-1$.
\end{lemma}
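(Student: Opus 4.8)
The plan is to reduce the general (tilted-plane) case to the already-proved orthogonal and axis-parallel cases by a careful case analysis on how many of the three planes $h_1,h_2,h_3$ are tilted or semi-tilted, exploiting the strong axis-symmetry of $\Ltb$. Suppose for contradiction that $H=\{h_1,h_2,h_3\}$ is a set of three planes (of arbitrary orientation) with every open cell of $\A(H)$ meeting at most $n/3-2$ lines. The key structural fact is that each bundle $\Ltbx,\Ltby,\Ltbz$ consists of $n/3$ parallel lines whose feet are spread out along a diagonal direction, so a single plane interacts with a bundle in one of only a few ways. Concretely, if a plane $h$ is \emph{not} parallel to the direction of a bundle $\B$, then $h$ crosses \emph{every} line of $\B$; so $h$ splits $\B$ into two groups (those whose crossing point is on either side), and on each line of $\B$ there is exactly one crossing. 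If $h$ \emph{is} parallel to the direction of $\B$, then $h$ either misses a line of $\B$ entirely or contains it — and by the perturbation, $h$ contains at most one line of each bundle.

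Next I would argue that we may assume no plane of $H$ is parallel to a coordinate axis in a way that helps too much: since $\Ltb$ has three pairwise ``orthogonal'' bundles, a plane that is genuinely tilted (parallel to no axis) crosses all $n$ lines, and a semi-tilted plane parallel to, say, the $x$-axis still crosses all $2n/3$ lines of $\Ltby\cup\Ltbz$. The heart of the argument is to show that such a generic plane cannot do better than the axis-parallel plane it ``approximates.'' For this I would fix the third, most structured plane first and look at the $2$-plane subarrangement restricted to each of the two open halfspaces it defines — this is the standard trick of first choosing the plane for which the combinatorics is cleanest, exactly as in the proof of Lemma~\ref{lem:3b-lb-perp}, where $h_3$ was treated last. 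A tilted plane $h$ crossing a bundle $\B$ of $n/3$ parallel lines still partitions $\B$ into a ``lower'' part of size $a$ and an ``upper'' part of size $n/3-a$ along $\B$'s direction, and the counting inequalities (i)--(iii) derived in Lemma~\ref{lem:3b-lb-perp} only used these split sizes together with the fact that a line of one bundle, once it is on a given side of a plane transverse to it, stays there. So the same three inequalities $i>j$, $k>i$, $k<j$ should be recoverable, yielding the same contradiction, as long as the three planes genuinely separate the three bundles into four ``octant-like'' regions with the diagonal monotonicity that $\Ltb$ was engineered to have.

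The bookkeeping case split I anticipate is: (1) all three planes axis-parallel — done by Lemmas~\ref{lem:3b-lb-perp} and~\ref{lem:3b-lb-par}; (2) exactly one plane tilted or semi-tilted; (3) exactly two; (4) all three. In cases (2)--(4) the main move is: because every tilted or semi-tilted plane crosses an entire bundle, replace reasoning about ``which side of the plane a line is on'' by reasoning about ``which side of the plane the crossing point on that line is on,'' and observe that the position of that crossing point, as one walks along the bundle's diagonal, is governed by a single monotone (or near-monotone, up to the $\eps_i$ perturbations) coordinate function; this lets the combinatorial skeleton of the orthogonal proof go through. One still has to check that the perturbations $\eps_i$ cannot be exploited to shave off the last line — but since a plane contains at most one line of each bundle, losing one line to ``being contained in a splitting plane'' is already accounted for in the bound $n/3-1$ versus $n/3$.

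\emph{The main obstacle} I expect is case (4) — three tilted planes in general position — and more generally controlling the interaction of two tilted planes with a single bundle: two transverse planes can split a bundle of $n/3$ parallel lines into three groups (below both, between, above both) in a genuinely two-dimensional pattern, and it is not a priori obvious that the diagonal spread of the bundle forces this pattern to be an ``interval'' pattern along the diagonal. Handling this cleanly will likely require observing that for each bundle, the pair of crossing abscissae of a line with the two planes is an affine function of the bundle parameter $i$, so the two planes induce at most a bounded number of ``types'' of lines and the sizes of the three groups are controlled by at most two threshold indices — enough to push through a counting argument parallel to (i)--(iii). If that affine/monotone structure can be made precise, the contradiction should close exactly as before, giving $g(\Ltb(n))\ge n/3-1$ for $n\ge 9$.
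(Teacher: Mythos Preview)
Your plan has the difficulty exactly backwards, and the central mechanism you propose does not make sense as stated. A plane that is tilted (or semi-tilted with respect to a bundle's direction) \emph{crosses} every line of that bundle; it does not ``partition $\B$ into a lower part of size $a$ and an upper part of size $n/3-a$.'' Each such line has a piece on each side of the plane. Consequently the inequalities (i)--(iii) from Lemma~\ref{lem:3b-lb-perp} --- which relied on each bundle being \emph{parallel} to one of the three planes, so that a line of $\Ltbx$ lies entirely above or below $h_2$ and entirely above or below $h_3$ --- do not transfer: once $h_2$ or $h_3$ is tilted, an $x$-line crosses it and can visit four octants rather than two, and there is no well-defined ``which side'' to feed into your counting.

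The paper exploits precisely this phenomenon in the opposite direction. Because a tilted plane crosses all $n$ lines and a semi-tilted plane crosses $2n/3$ of them, the total number of line--cell incidences is at least $(n/3)(6+s+2t)-3(a+2s)$; dividing by the (at most) eight cells gives, for every configuration with $t\ge 1$ or $s\ge 2$, a cell meeting at least $n/3-1$ lines immediately. So the case you flag as the ``main obstacle'' (all three planes tilted) is in fact the easiest. The only case that survives this pigeonhole sweep is $a=2$, $s=1$, $t=0$ --- two axis-parallel planes and one semi-tilted plane --- and that is where the real work lies: the paper needs four separate geometric claims constraining the positions of $h_1,h_2,h_3$, followed by a case split on the sign of the slope of the semi-tilted plane in the $xz$-projection, before a chain of inequalities analogous to (i)--(iii) can be closed. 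Your outline neither identifies this case nor provides the tools to handle it.
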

\begin{proof}
Consider three splitting planes $h_1$, $h_2$ and $h_3$, and suppose for a contradiction that
each cell in $\A(H)$ intersects fewer than~$n/3-1$ lines. Define
\begin{itemize}[nosep]
\item $a \mydef \text{number of axis-parallel planes in $H$}$,
\item $s \mydef \text{number of semi-tilted planes in $H$}$, and
\item $t \mydef \text{number of tilted planes in $H$}$.
\end{itemize}
Note that $a+s+t=3$. The case where all three planes are axis-parallel has 
already been handled in the previous two lemmas, so we can assume that $a<3$. 
Any axis-parallel plane is crossed by $n/3$ lines from $\Ltb$, any semi-tilted
plane is crossed by $2n/3$ lines from $\Ltb$, and any tilted plane is crossed by $n$ lines from $\Ltb$.
(Recall that, when we say that a line \emph{crosses} 
a plane, or a plane \emph{crosses} a line, we mean that there is a proper intersection; 
that is, the line is not contained in the plane.) 
Thus 
\[
\text{(\# fragments generated by the planes in $H$)} 
= \frac{n}{3} \cdot (3 +a+2s+3t).
\]
Some fragments may be contained in one of the splitting planes and, hence, not
appear inside any cell. Note that an axis-parallel splitting plane may contain 
at most one line from $\Ltb$, a semi-tilted splitting plane may contain at most two lines
from $\Ltb$, and a tilted splitting plane cannot contain any line from $\Ltb$.
Furthermore, a line contained in splitting plane consists of at most three fragments,
arising due to the line crossing the other two splitting planes. Thus we have
at least $(n/3) \cdot (3+a+2s+3t) - 3(a+2s)$ fragments
appearing inside the cells in~$\A(H)$. Rewriting this, we conclude that
\begin{equation}
\text{(\# cell-line intersections)} \ge  (n/3) \cdot (6+s+2t) - 3(a+2s).  \label{eq:intersections}
\end{equation}
Since $\A(H)$ has at most eight cells, there must be a cell with 
$
\ceil*{\frac{(n/3) \cdot (6+s+2t) - 3(a+2s)}{8}}
$
intersections.
We now make a case distinction, depending on the values of $a$, $s$ and $t$.
Recall that $n$ is a multiple of~3 and that we are now dealing with the case~$n\geq 9$.
The first five cases we consider can be handled easily using Eq.~\eqref{eq:intersections},
as follows. 
\begin{itemize}
    \item If $t\geq 2$, or $t=1$ and $s=2$, there is a cell with at least 
      $\ceil*{\frac{(n/3)\cdot 10 - 6}{8}} \geq n/3-1$ fragments.
\item If $t=s=1$  (hence, $a=1$) there is a cell with at least 
      $\ceil*{\frac{(n/3)\cdot 9 - 9}{8}}
       = \ceil*{\frac{9}{8}(n/3-1)}
       \geq n/3-1$ fragments.
\item If $t=1$ and $s=0$ (hence, $a=2$) there is a cell with at least 
      $\ceil*{ \frac{(n/3)\cdot 8 - 6}{8}}
      = \frac{n}{3}-1$ fragments.
\item If $s=3$ (hence, $t=a=0$) there is a cell with at least 
      $\ceil*{\frac{(n/3)\cdot 9 - 18}{8}}
      = \ceil*{\frac{n}{3}-1 + \frac{n-30}{24}}$ fragments.  
      For $n\geq 9$ this is at least $n/3-1$.
\item If $t=0$ and $s=2$ (hence, $a=1$) there is a cell with at least 
      $\ceil*{\frac{(n/3)\cdot 8 - 15}{8}}
       = \ceil*{n/3- 15/8}
       = n/3-1$ fragments.
\end{itemize}
Each of the cases above gives us the desired contradiction.
The only remaining, and most difficult, case is when $t=0$, $s=1$, and $a=2$.
In this case we need a more refined analysis, given next.
\medskip

In the rest of the proof we assume there are two axis-parallel splitting planes, 
say $h_1$ and $h_2$, and one semi-tilted splitting plane $h_3$.
If $h_1$ and $h_2$ are parallel to each other, the number of cells in $\A(H)$ is six.
If all three planes are parallel to the same axis, then $\A(H)$ has seven cells.
In either case, from Equation~\ref{eq:intersections} it follows that there is a cell with
\[
\ceil*{\frac{(n/3) \cdot (6+s+2t) - 3(a+2s)}{7}}
= 
\ceil*{\frac{(n/3) \cdot 7- 12}{7}}
= \ceil{n/3 - 12/7} \geq 
n/3 -1.
\]
intersections giving the desired contradiction.  
Recall that the construction is symmetric with respect to the 
axes---see Fig.~\ref{fig:bundles-construction}(ii)---and so without loss of generality 
we may now assume that $h_1$ is perpendicular to the $x$-axis, 
$h_2$ is perpendicular to the $y$-axis, and $h_3$ is parallel to the $y$-axis.
We now proceed to derive a contradiction with our assumption that each cell
in $\A(H)$ meets less than $n/3-1$ lines. 
To this end we first prove several properties that the splitting planes~$h_1,h_2,h_3$
must satisfy, under our assumption. In the following, we consider the projection onto
the $xy$-plane, and statements like ``to the left of'' or ``above'' refer to the
situation in this projection. See Figure~\ref{fig:semi-tilted-1}(i), and note that the lines
of $\Ltbz$ show up as points in the figure.
\begin{figure}
\begin{center}
\includegraphics{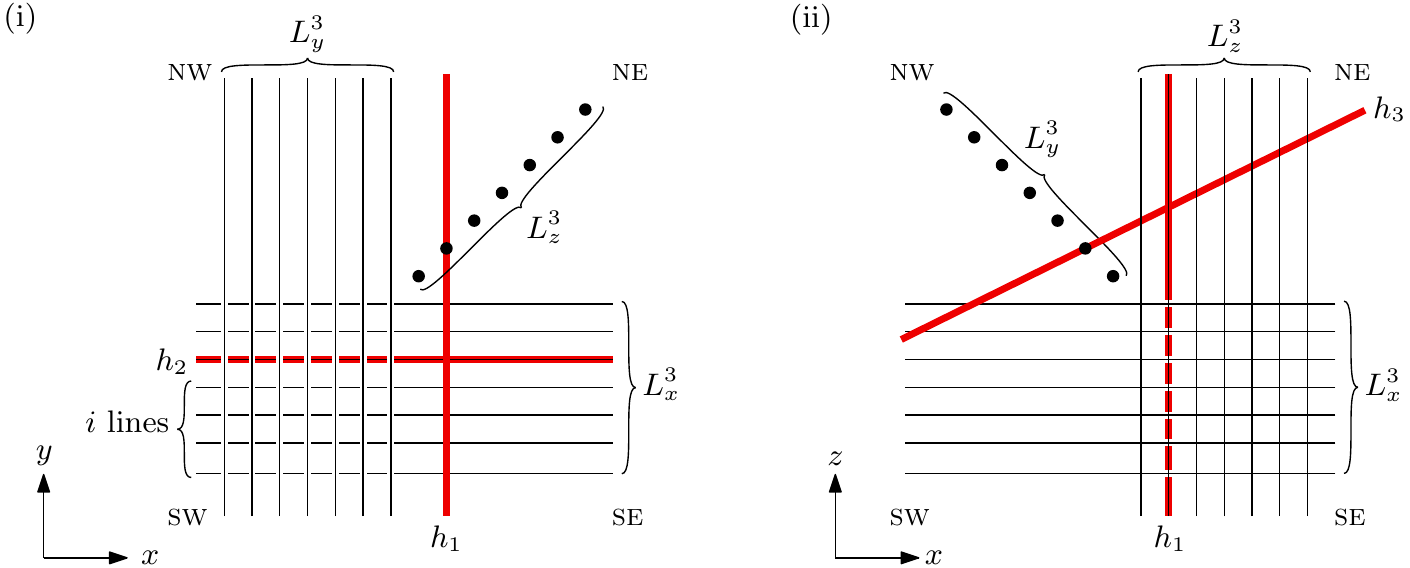}
\end{center}
\caption{Illustration for the proof of Lemma~\ref{lem:3b-lb-general}. Note that in
part~(ii) of the figure, the semi-tilted plane~$h_3$ can also show up as a line with negative 
slope. However, due to the small perturbation of the lines in $\Ltby$, $h_3$ can contain 
at most two lines from $\Ltby$ (here showing up as points).
}
\label{fig:semi-tilted-1}
\end{figure}
\begin{myquote}   
\vspace*{-5mm}
\begin{claim} \label{claim:1}
At least one line of $\Ltbx$ must lie above $h_2$.
Hence, all lines of $\Ltbz$ lie above~$h_2$.
\end{claim}
\begin{proof}
  For a contradiction, assume that no line of $\Ltbx$ lies strictly above $h_2$. 
  There is at most one line contained in $h_2$, so at least $n/3-1$ lines of $\Ltbx$
  lie below~$h_2$. Each such line crosses $h_1$ and $h_3$---recall that the latter splitting
  plane is semi-tilted and parallel to the $y$-axis---and therefore crosses three of the
  four octants below $h_2$. Moreover, each line of $\Ltby$, except at most
  one line contained in~$h_1$ and at most two lines contained in~$h_3$, intersects one of these octants. 
  Hence, the total count for the four octants below $h_2$ is at least 
  $3\cdot (n/3-1) + (n/3-3)$, and consequently, at least one of the four octants must intersect at least
  \[
  \ceil*{ \frac{3\cdot (n/3-1) + (n/3-3)}{4} }
  = 
  \ceil*{ \frac{4n/3-6}{4} }
  = 
  n/3-1
  \]
  lines, contradicting our assumption. (Here we use our assumption 
  that $n$ is a multiple of~3 and that $n\geq 9$.)
\end{proof}
\end{myquote}
Claim~\ref{claim:1} can be used to restrict the possible positions of~$h_1$.
\begin{myquote}   
\vspace*{-5mm}
\begin{claim} \label{claim:2}
$h_1$ must have at least one line of $\Ltbz$ on either side of it.
Hence, all lines of $\Ltby$ lie to the left of~$h_1$.
\end{claim}
\begin{proof}
By Claim~\ref{claim:1}, all lines from $\Ltbz$ lie above~$h_2$. If $h_1$ does not
have lines from $\Ltbz$ on either side of it, then one side has at least $n/3-1$ of these lines.
Hence, one of the four quadrants defined by $h_1,h_2$ in the $xy$-projection---in
particular, one of the quadrants above $h_2$---contains at least $n/3-1$ lines 
from $\Ltbz$. Recall that the splitting plane~$h_3$
is a semi-tilted plane parallel to the $y$-axis. Hence, $h_3$ is crossed by all lines from~$\Ltbz$,
and so there would be an octant in $\A(H)$ that intersects at least $n/3-1$ lines,
thus contradicting our assumptions.
\end{proof}
\end{myquote}
Next we restrict the possible positions for the semi-tilted splitting plane~$h_3$.
\begin{myquote} 
\vspace*{-5mm}
\begin{claim} \label{claim:3}
 $h_3$ must have at least one line of $\Ltby$ on either side of it.
\end{claim}
\begin{proof}
Observe that $h_3$ can contain at most two lines from~$\Ltby$. 
Suppose for a contradiction that there are at least $n/3-2$ lines to one side of~$h_3$,
and consider the four octants of $\A(H)$ lying to this side.
Every line from $\Ltbx$ and $\Ltbz$ intersects at least one of these octants,
except at most one line from $\Ltbx$ contained in~$h_2$ and at most one line 
from $\Ltbz$ contained in~$h_1$. Moreover, each of the at least
$n/3-2$ lines from $\Ltby$ not contained in $h_3$, intersects~$h_2$
and, hence, two of the octants. It follows that there is an octant that
intersects
\[
\ceil*{\frac{(2n/3-2) + 2(n/3-2)}{4}}
= 
\ceil*{\frac{4n/3 - 6}{4}}
= 
n/3 -1
\]
lines, thus contradicting our assumptions.
\end{proof}
\end{myquote}
Let $i$ be the number of lines of $\Ltbx$ below $h_2$. 
\begin{myquote}   
\vspace*{-5mm}
\begin{claim} \label{claim:29ths}
$i \geq \frac{2}{9}n-\frac{1}{3}$.
\end{claim}
\begin{proof}
Consider the two quadrants above $h_2$ and the four octants of $\A(H)$ corresponding to
those quadrants. Each line of $\Ltbx$ above $h_2$ intersects $h_1$ and $h_3$, and thus 
intersects three of these octants, for a total of $3(n/3-i-1)$ octant-line intersections.  
Each line of $\Ltby$, except for at most two lines contained in $h_3$, intersects one 
octant, thus contributing $n/3-2$ octant-line intersections. 
Finally, each line of $\Ltbz$, except for at most one line contained in~$h_1$,
intersects two octants (since they intersect $h_3$), contributing $2(n/3-1)$ octant-line intersections. 
Hence, there is an octant intersecting
\[
\ceil*{\frac{3(n/3-i-1) + (n/3-2)+2(n/3-1)}{4}}
=
\ceil*{n/3 - 2 + \frac{2n/3-3i+1}{4}}
\]
lines. In order not to contradict the assumption that no octant intersects more
than $n/3-2$ lines, we must thus have $2n/3-3i+1\leq 0$. Hence, $i\geq 2n/9 -1/3$, as claimed.
\end{proof}
\end{myquote}
To finish the proof of Lemma~\ref{lem:3b-lb-general}, we switch views and consider 
the projection onto the $xz$-plane. Here $h_3$ shows up as a slanted line, as shown 
in Fig.~\ref{fig:semi-tilted-1}(ii). The are two cases, depending on the slope of $h_3$
in the $xz$-projection.
\medskip

\noindent\emph{Case 1: The projection of $h_3$ onto the $xz$-plane has positive slope.}
\\[2mm]
Consider the \SW-quadrant in Fig.~\ref{fig:semi-tilted-1}(i), that is, the quadrant
below~$h_2$ and to the left of $h_1$. This quadrant corresponds to a vertical
column in $\Reals^3$, which is cut into two octants by the semi-tilted plane~$h_3$.
Clearly, each line in $\Ltby$, except for at most two lines contained in $h_3$,
intersects one of these two octants. Furthermore, each of the $i$~lines of $\Ltbx$ 
below $h_2$ intersects both octants, because $h_3$ has positive slope in the
$xz$-projection. Indeed, by Claim~\ref{claim:3} the plane~$h_3$ must have at least one
line from $\Ltby$ on either side of it---see Fig.~\ref{fig:semi-tilted-1} where these lines show
up as points---and together with the fact that $h_3$ has positive slope this implies
that all intersections of $\Ltbx$ with $h_3$ lie to the left of~$h_1$.
Thus, all $i$~lines of $\Ltbx$ below $h_2$ intersects both octants, contributing $2i$
octant-line intersections. Hence, the total number of octant-line intersections
in the two octants is at least $(n/3-2)+2i$.  Since $i \geq \frac{2}{9}n-\frac{1}{3}$
by Claim~\ref{claim:29ths} there is an octant that intersects at least
\[
\ceil*{\frac{(n/3 - 2) + 2i}{2}}
\geq
\ceil*{\frac{n/3 + (4n/9 -2/3) - 2}{2}}
=
\ceil*{n/3 - 2 + \frac{ (n/9  - 2/3)}{2}}
\]
lines. Since $n\geq 9$ there is an octant intersecting more than $n/3-2$
lines, thus giving the desired contradiction for Case~1.
\medskip

\noindent\emph{Case 2: The projection of $h_3$ onto the $xz$-plane has negative slope.}
\\[2mm]
\begin{figure}
\begin{center}
\includegraphics{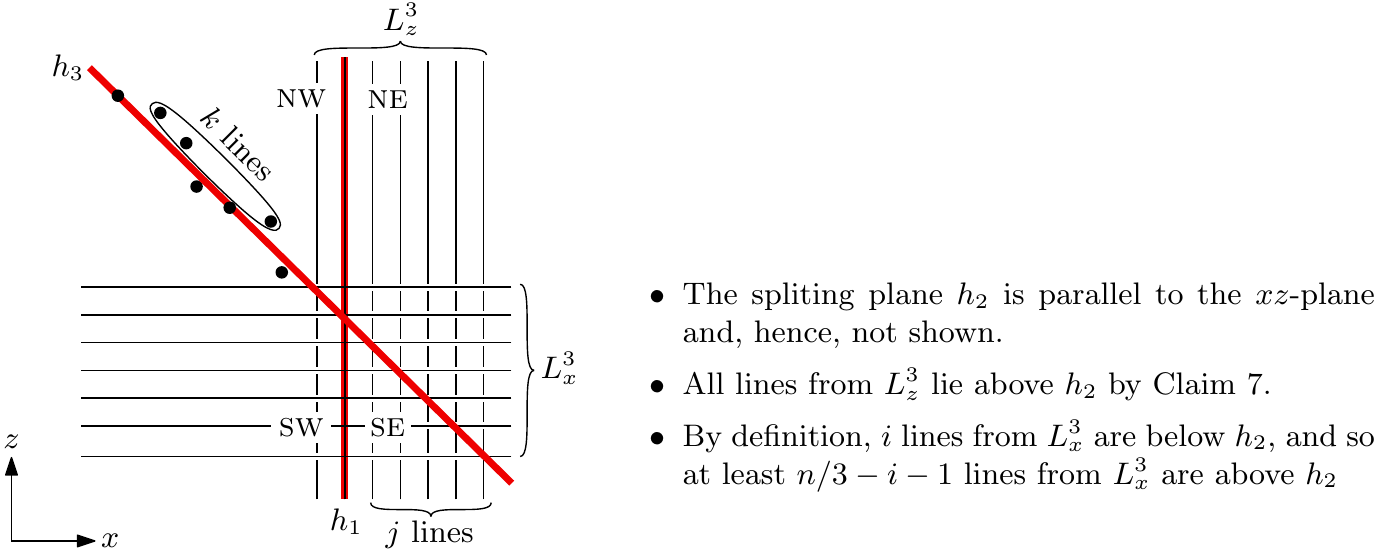}
\end{center}
\caption{Illustration for Case~2 of the proof of Lemma~\ref{lem:3b-lb-general}.
Recall that the lines in the construction were perturbed slightly. For the lines in
$\Ltby$, which show up as points in the figure, these perturbations are shown 
larger than they actually are.
Due to the perturbations, $h_3$ can contain at most two lines from $\Ltby$.
}
\label{fig:semi-tilted-2}
\end{figure}
This case is illustrated in Fig.~\ref{fig:semi-tilted-2}. 
Let $j$ be the number of lines of $\Ltbz$ to the right of~$h_1$
and let $k$ be the number of lines of $\Ltby$ above~$h_3$; 
see Fig.~\ref{fig:semi-tilted-2}.   We label the ``quadrants'' induced by $h_1,h_3$ 
the $xz$-projection as \NE, \SE, \SW, \NW.
Each of the these quadrants defines a column 
in~$\Reals^3$, which is partitioned into two octants by~$h_2$.
We label the octants in the \NE-column that are above and below $h_2$
by \NEt and \NEb, respectively. The octants in the other columns are labeled similarly.
\begin{itemize}
\item Consider the \NWt octant. It meets at least $n/3-j-1$ lines of $\Ltbz$, because
    all lines of $\Ltbz$ lie above~$h_2$ by Claim~\ref{claim:1}. Moreover, it meets
    $k$ lines of~$\Ltby$. (It may also meet lines from $\Ltbx$ but we need not take 
    them into account.) Since any octant is assumed to intersect less
    than $n/3-1$ lines, we must have $(n/3-j+1)+k < n/3-1$. Hence, $j>k+2$.
\item Now consider the \SWb octant. It meets $i$ lines from $\Ltbx$ and at least
    $n/3-k-2$ lines of $\Ltby$, and so we must have $i + (n/3-k-2) < n/3-1$.
    Hence, $k>i-1$.
\item Finally, consider the \NEt octant. It meets $j$ lines from $\Ltbz$, because
    all lines of $\Ltbz$ lie above~$h_2$ by Claim~\ref{claim:1}. Note that
    all lines from $\Ltbx$ intersect the \NE quadrant, since $h_3$ has negative
    slope in the $xz$-projection. At least $n/3-i-1$ of these lines
    lie above $h_2$ and, hence, intersect the \NEt octant.
    We can conclude that we must have $j + (n/3-i-1) < n/3-1$, and so~$i>j$.
\end{itemize}
Putting these three inequalities together we obtain
\[
j > k+2 > i+1 > j+1.
\]
This is the desired contradiction and finishes the proof.
\end{proof}

\section{Better lower bounds for axis-parallel splitting planes}
\label{sec:lower-bounds}
In this section we give improved lower bounds on $\gperp(n)$ and $\gpar(n)$.
To this end we present a family of configurations $\Lfour(n)$ of $n$ lines, with $n$ divisible by 8, 
and prove that any decomposition of the type under consideration must have a cell intersected 
by at least $3n/8 - 1$~lines. Interestingly, for mutually orthogonal splitting planes the $-1$ term disappears.
\begin{theorem}
  \label{th:unbalancedlb}
  For every $n$ divisible by 8, there exists a configuration $\Lfour=\Lfour(n)$ such that 
  $\gperp(\Lfour(n)) = 3n/8$ and $\gpar(\Lfour(n)) = 3n/8 - 1$. 
  Hence, for any $n\geq8$, we have $\gperp(n) \geq \frac{3}{2}\floor{n/4}$ and $\gpar(n) \geq \frac{3}{2}\floor{n/4}-1$.
\end{theorem}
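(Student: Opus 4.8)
The plan is to build a configuration $\Lfour(n)$ that is, in spirit, a ``refined three-bundle'' construction: replace each of the three bundles of $n/3$ parallel lines by a structure of $n/4$ lines arranged so that whichever way the three axis-parallel planes are placed, the bundles interact more expensively than in $\Ltb$. Concretely, I would take two ``heavy'' bundles and distribute the remaining lines so that any axis-parallel cutting plane that tries to split one bundle is forced to pay for pieces of two others simultaneously, pushing the worst cell from $\approx n/3$ up to $3n/8$. The skeleton of each bundle will again be axis-parallel lines placed along a diagonal staircase (as in $\Ltbx,\Ltby,\Ltbz$), and as in Section~\ref{sec:three-bundles} I would apply a generic $\eps_i$-perturbation so that no plane contains three lines; this lets the same counting arguments go through with only $O(1)$ slack for lines lying inside the splitting planes, which is exactly where the ``$-1$'' in the $\gpar$ bound and its absence in the $\gperp$ bound come from (a mutually orthogonal triple cannot simultaneously absorb a full line from each of the relevant bundles into the cells' complement in the way three freely-placed parallel-allowed planes can).

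The argument itself would follow the template of Lemmas~\ref{lem:3b-lb-perp} and~\ref{lem:3b-lb-par}. First I would record the easy upper bounds $\gperp(\Lfour)\le 3n/8$ and $\gpar(\Lfour)\le 3n/8-1$ by exhibiting an explicit good triple of planes for the construction. Then, for the lower bound, I assume for contradiction a triple $H=\{h_1,h_2,h_3\}$ of axis-parallel planes with every open cell meeting fewer than $3n/8$ lines (resp.\ fewer than $3n/8-1$ in the orthogonal case) and split into cases by how many of the $h_i$ are mutually parallel. The all-parallel and two-parallel cases are dispatched by a fragment-counting argument analogous to Equation~\eqref{eq:intersections}: if too few planes are ``transverse'' to a bundle, some line crosses all the cells in a column and a pigeonhole over the $\le 6$ or $\le 7$ cells already exceeds the target. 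The main case is three transverse axis-parallel planes, one orthogonal to each axis; here I would project to the $xy$-plane as in Fig.~\ref{fig:lemma3and4}, introduce parameters $i,j,k$ counting how the staircases of the three bundles are split by $h_1,h_2,h_3$, and derive a cyclic chain of strict inequalities (of the form $i>j$, $k>i$, $k<j$ — exactly the contradiction pattern at the end of Lemma~\ref{lem:3b-lb-perp}, but now with the refined thresholds forcing the chain to close). The geometry of $\Lfour$ must be chosen precisely so that these three inequalities are the ones that fall out, and so that the constant $3n/8$ (not $n/3$) is the breaking point.

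The hard part will be designing $\Lfour(n)$ so that the case analysis is both complete and tight: I need the construction to be symmetric enough that ``without loss of generality'' reductions cut the number of cases to something manageable, yet asymmetric enough that the extra $\tfrac{1}{24}n$ over the three-bundle bound actually materializes in every configuration of the planes — including the awkward sub-cases where a splitting plane passes through the bulk of one bundle and grazes another. I expect the bookkeeping for the two-parallel-planes case to be the most error-prone, since there the number of cells drops and the fragment count has more slack, so the construction has to be arranged so that the ``saved'' planes are wasted on a bundle that does not help. Establishing that the $-1$ truly cannot be removed for $\gpar$ while it can be for $\gperp$ — i.e., separating these two functions — is the subtle payoff, and it hinges on an extremal-placement argument showing a non-orthogonal parallel pair can simultaneously bury one line from each of two bundles inside the splitting planes, whereas an orthogonal triple cannot.
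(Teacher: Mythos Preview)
Your proposal has a genuine gap: you have not actually specified $\Lfour(n)$, and the construction you gesture at---``replace each of the three bundles of $n/3$ parallel lines by a structure of $n/4$ lines''---accounts for only $3n/4$ lines and cannot be the right object. The paper's construction is crucially \emph{unbalanced}: it takes $|\Lfour_x|=|\Lfour_y|=n/4$ and $|\Lfour_z|=n/2$, with the two copies of $n/4$ $z$-lines placed in separate diagonal blocks. This asymmetry is what drives the bound up from $n/3$ to $3n/8$; without it, a three-equal-bundle variant will not break the $n/3$ barrier, and your ``two heavy bundles and distribute the remaining lines'' hint is too vague to capture it.

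Second, the proof mechanism you plan to reuse---the cyclic chain of strict inequalities $i>j$, $k>i$, $j>k$ from Lemma~\ref{lem:3b-lb-perp}---is not how the paper handles $\Lfour$, and it is not clear it would close here. The paper instead does direct pigeonhole counting: in each sub-case it identifies two or three specific octants, sums their incidence counts, and shows the total forces one of them to at least $3n/8$ (or $3n/8-1$). The orthogonal case (Lemma~\ref{lem:4b-lb-perp}) splits on where $h_1,h_2$ fall relative to the two $z$-blocks and on whether $h_3$ separates $\Lfour_x$ from $\Lfour_y$; the parallel case (Lemma~\ref{lem:4b-lb-par}) handles three orientations of the two-parallel-plus-one configuration. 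None of this uses a contradiction cycle.

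Finally, two smaller points. You have the thresholds reversed: for $\gperp$ (orthogonal) the target is $3n/8$, for $\gpar$ it is $3n/8-1$, since $\gpar\le\gperp$. And your explanation of the $-1$---that a non-orthogonal triple can ``absorb'' more lines into the planes---is not the mechanism. The $-1$ comes from the fact that a two-parallel-plus-one configuration (six cells) admits an explicit placement achieving $3n/8-1$, and the matching lower bound in that case falls out of a sum $\SW+\NE+\south=9n/8-3$ over three cells; it is a genuine improvement in the partition, not an artifact of lines lying in the planes.
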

The rest of this section is devoted to the proof of Theorem~\ref{th:unbalancedlb},
where $\Lfour$ is defined as the union of the following sets:
\begin{align*}
  \Lfour_x & \mydef \{ (0,i,i + n/4) + \lambda (1,0,0): i \in [n/4, n/2) \},\\
  \Lfour_y & \mydef \{ (i+n/4,0,i) + \lambda (0,1,0):  i\in [0, n/4) \}, \text{ and} \\
  \Lfour_z & \mydef \left\{ (i,i,0) + \lambda (0,0,1): i \in [0, n/4) \cup [n/2, 3n/4) \right\}.
\end{align*}
Note that $|\Lfour_x| = |\Lfour_y| = n/4$ and $|\Lfour_z| = n/2$, and that, up to symmetries, 
any projection to an axis-parallel plane looks like one of projections in Fig.~\ref{fi:unbalancedlb}.
\begin{figure}[ht]
\begin{center}
\includegraphics[scale=0.95]{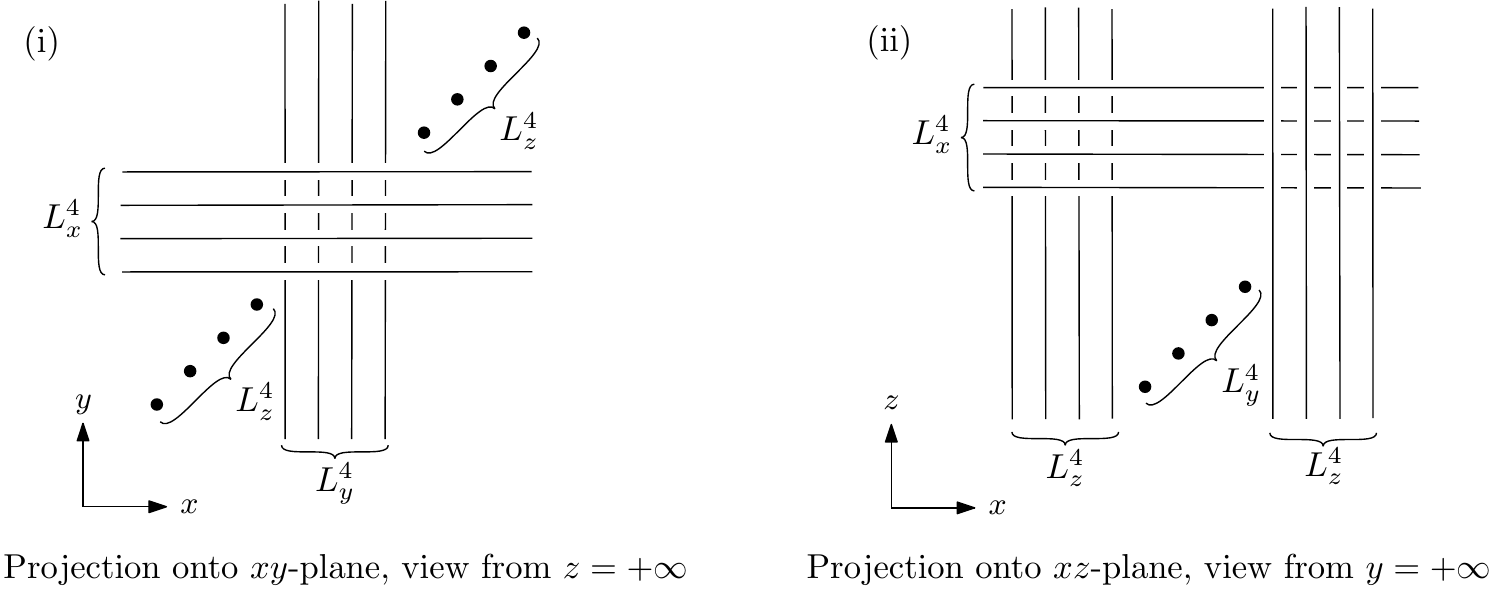}
\end{center}
\caption{Line set $\Lfour$ from the proof of Theorem~\ref{th:unbalancedlb}}
\label{fi:unbalancedlb}
\end{figure}
We will prove the theorem in two steps, by first considering the case of mutually
orthogonal splitting planes and then the case of arbitrary axis-parallel splitting planes.
In the remainder of this section, we assume that $n$ is divisible by~8 and that~$n\geq 8$.
\begin{lemma}\label{lem:4b-lb-perp}
For any $n$ divisible by $8$, we have $\gperp(\Lfour(n)) = 3n/8$.
\end{lemma}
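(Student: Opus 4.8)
The plan is to prove the two inequalities $\gperp(\Lfour(n)) \le 3n/8$ and $\gperp(\Lfour(n)) \ge 3n/8$ separately. For the upper bound I would exhibit the mutually orthogonal planes $h_1 \mydef \{x = 3n/8 - \tfrac12\}$, $h_2 \mydef \{y = 3n/8 - \tfrac12\}$, $h_3 \mydef \{z = 3n/8 - \tfrac12\}$, i.e.\ planes placed just below the midpoint of the gap $[n/4,n/2)$ that separates the two clusters of $\Lfour_z$ in each coordinate. Under this choice every line of the lower cluster of $\Lfour_z$ has its $xy$-projection in the \SW-quadrant of $h_1,h_2$ and is cut by $h_3$ into one fragment above and one below, and symmetrically the upper cluster sits in the \NE-column; the $n/4$ lines of $\Lfour_x$ split $\tfrac{n}{8}$--$\tfrac{n}{8}$ by $h_1$, the $n/4$ lines of $\Lfour_y$ split $\tfrac{n}{8}$--$\tfrac{n}{8}$ by $h_2$, and — because $\Lfour_x$ lives in the $z$-band $[n/2,3n/4)$ while $\Lfour_y$ lives in $[0,n/4)$ — the $\Lfour_x$-fragments fall on one side of $h_3$ and the $\Lfour_y$-fragments on the other. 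An octant-by-octant count then shows each of the four octants over the \SW- and \NE-columns meets $\tfrac{n}{4}+\tfrac{n}{8}=\tfrac{3n}{8}$ lines while the other four octants meet at most $n/4$ lines, so $\gperp(\Lfour)\le 3n/8$.

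For the lower bound, suppose for contradiction that there are mutually orthogonal axis-parallel planes $h_1=\{x=a\}$, $h_2=\{y=b\}$, $h_3=\{z=c\}$ with every open octant of $\A(H)$ meeting at most $3n/8-1$ lines. Project onto the $xy$-plane, where $\Lfour_x$ and $\Lfour_y$ appear as axis-parallel lines, $\Lfour_z$ as two diagonal clusters of $n/4$ points, and $h_1,h_2$ cut the plane into four quadrant-columns. Three structural remarks drive the argument: (i) the single line $x=a$ can split at most one of the two $\Lfour_z$-clusters into two nonempty parts, and likewise $y=b$, since the clusters occupy disjoint coordinate ranges; (ii) every line of $\Lfour_x$ crosses $h_1$ and every line of $\Lfour_y$ crosses $h_2$, so each contributes exactly one projected fragment to each of two adjacent columns; (iii) each axis-parallel plane contains at most one line of $\Lfour$, so at most three lines of $\Lfour$ lie in $h_1\cup h_2\cup h_3$. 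I would first settle the \emph{main case} $a,b\in(n/4,n/2)$: here the lower cluster lies entirely in the \SW-column and the upper cluster entirely in the \NE-column, contributing $2\cdot\tfrac{n}{4}+2\cdot\tfrac{n}{4}=n$ fragments to these two columns, while each of the $n/4$ lines of $\Lfour_x$ and $n/4$ lines of $\Lfour_y$ not contained in a splitting plane adds one more fragment to the \SW- or \NE-column, of which by (iii) at most three are lost. Hence the four octants over the \SW- and \NE-columns together meet at least $n+(n/2-3)=3n/2-3$ lines, so one of them meets at least $\ceil*{(3n/2-3)/4}=3n/8$ lines (using $8\mid n$), a contradiction; note this case does not even use $h_3$.

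It remains to handle the \emph{boundary cases}, where $a$ or $b$ lies inside a cluster range $[0,n/4)$ or $[n/2,3n/4)$, or outside $[0,3n/4)$; the symmetries of $\Lfour$ (acting on projections as in Fig.~\ref{fi:unbalancedlb}) reduce these to a handful of representatives. In several of them a whole $\Lfour_z$-cluster, together with all of $\Lfour_x$ or all of $\Lfour_y$, piles up on one side of $h_1$ or $h_2$, forcing a single column to receive at least $3n/4$ fragments and hence an octant meeting at least $3n/8$ lines. The delicate case is when $a$ sits just below $n/4$ and $b$ just above $n/2$ (and its symmetric variants): no single column is quite heavy enough, so one must run a chain-of-inequalities argument in the spirit of Lemma~\ref{lem:3b-lb-perp}, lower-bounding the loads of three or four carefully chosen octants in terms of the combinatorial parameters fixed by $a,b,c$ (how each cluster is split by $h_1,h_2$, and where the $\Lfour_x$- and $\Lfour_y$-fragments fall relative to $h_3$) and extracting an inconsistent cyclic chain. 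I expect this last case — and the bookkeeping needed to keep every constant tight enough that the bound is exactly $3n/8$ rather than $3n/8-O(1)$, while allowing splitting planes to contain lines — to be the main obstacle.
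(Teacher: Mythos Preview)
Your overall strategy matches the paper's: exhibit explicit planes for the upper bound, then for the lower bound project to the $xy$-plane and do a case analysis on where $h_1,h_2$ sit relative to the two $\Lfour_z$-clusters. Your ``main case'' $a,b\in(n/4,n/2)$ is the paper's Case~3, and your averaging over the four \SW/\NE octants is in fact a bit cleaner than the paper's treatment there (which splits further on~$k$). Your easy boundary cases correspond to the paper's Case~1. One small slip in the upper bound: $h_1=\{x=\text{const}\}$ is crossed by every line of $\Lfour_x$; it is $h_2$ that partitions $\Lfour_x$ by $y$-coordinate and $h_1$ that partitions $\Lfour_y$ by $x$-coordinate, so those two labels should be swapped.

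The real gap is the ``delicate case'' ($a<n/4$ and $b\ge n/2$, up to symmetry), which is the paper's Case~2. You propose a cyclic chain of inequalities in the style of Lemma~\ref{lem:3b-lb-perp}, but this is more machinery than the situation requires and is not what the paper does. The observation you are missing is that in this range \emph{all} of $\Lfour_x$ lies south of $h_2$ and \emph{all} of $\Lfour_y$ lies east of $h_1$, so the \SE-column already carries every one of the $n/2$ lines of $\Lfour_x\cup\Lfour_y$. The paper then sub-cases on whether $h_3$ contains a line of $\Lfour_x$ ($0\le k<n/4$) or of $\Lfour_y$ ($n/4\le k<n/2$); in each sub-case it either exhibits a single octant (\NEb, respectively \SWt) that already meets at least $3n/8$ lines, or falls back on the two \SE-octants, whose combined count is at least $n/2+2\cdot(n/8)-1=3n/4-1$. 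No cyclic inconsistency is needed; direct octant counting with this one extra split on $k$ suffices, and the bookkeeping to hit exactly $3n/8$ (rather than $3n/8-O(1)$) is handled by the standing assumption that each splitting plane can be taken to contain a line of~$\Lfour$.
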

\begin{proof}
Consider three splitting planes $h_1$, $h_2$, and $h_3$, orthogonal to the $x$-, $y$-, 
and $z$-axis, respectively. Let $\ell_1\colon x = i$ and $\ell_2\colon y = j$ be the 
projections of $h_1$ and $h_2$ onto the $xy$-plane; see Fig.~\ref{fi:unbalancedlb-1}.
Suppose that $h_3$ has $k$ of the lines from $\Lfour_x \cup \Lfour_y$ lying strictly above it. 
(Note that for the purposes of lower-bound analysis we can assume that $h_1$, $h_2$, and $h_3$ 
each contain a line of $\Lfour$, as shifting them until they do can only decrease the number of 
lines of $\Lfour$ meeting each open octant.  We will make this assumption hereafter in this proof, 
that is, we suppose that $i$, $j$, and $k$ are integers in the relevant range.)
We use $\NE$ to denote the open north-east quadrant defined by $\ell_1,\ell_2$ and define $\SE$, $\SW$, and $\NW$ similarly. 
For a quadrant $\mbox{\sc q} \in \{\NW, \NE, \SW, \SE\}$, we define 
\mbox{\sc q-top} and \mbox{\sc q-bottom} to be the open octants induced by $h_3$.
With a slight abuse of notation, we also use $\NE$ (and, similarly,
the other variables) to denote the total number of objects
(lines and points) incident to the region~$\NE$.

We first observe that by setting $i = j = 3n/8$ and $k = n/4$ we 
obtain~$\gperp(\Lfour) \leq 3n/8$. It remains to argue that $\gperp(\Lfour) \geq 3n/8$.
Up to translation,  rotation, and reflection, it suffices to consider the following cases.
\medskip

\begin{figure}[b]
  \begin{center}
    \includegraphics[width=1.0\textwidth]{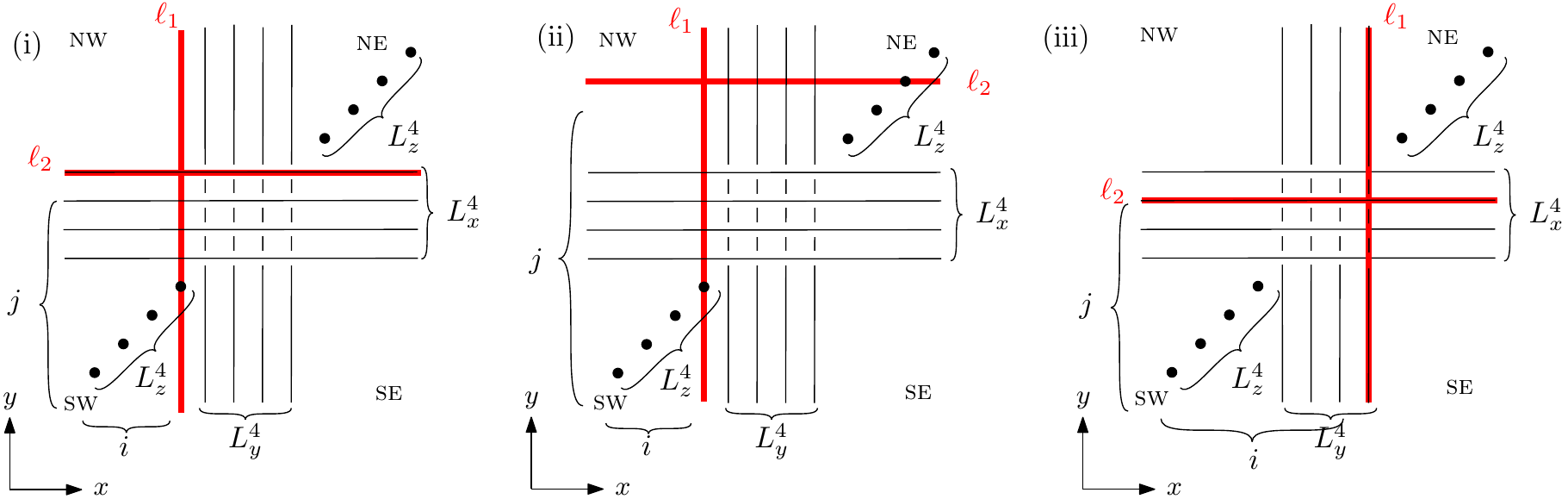}
  \end{center}
  \caption{Different cases of partitioning with three orthogonal planes, projected to the $xy$-plane showing view from $z=+\infty$.} 
  \label{fi:unbalancedlb-1}
\end{figure}

\noindent\emph{Case 1: $0 \leq i < n/4$ and $0 \leq j < n/2$; 
see Fig.~\ref{fi:unbalancedlb-1}(i).}
\\[2mm]
Observe that $\NE$ is incident to, at least, all $n/4$ lines in $\Lfour_y$, 
and $n/4$ lines of $\Lfour_z$. It follows that the number of incidences in $\NEt$ 
and $\NEb$ together is at least $n/4 + 2(n/4) - 1 = 3n/4 - 1$, 
implying that at least one of $\NEt, \NEb$ is incident to 
$$\left\lceil\frac{3n}{8} - \frac{1}{2}\right\rceil\geq \frac{3n}{8}$$ lines, where the inequality uses the fact that since $n$ is divisible by 8.
\medskip

\noindent\emph{Case 2: $0 \leq i < n/4$ and $n/2 \leq j < 3n/4$; 
see Fig.~\ref{fi:unbalancedlb-1}(ii).}
\\[2mm]
We first additionally assume that $0 \leq k < n/4$. Thus, $h_3$ meets some line of $\Lfour_x$. 
$\NEb$ meets $n/4$ lines of $\Lfour_y$ and $3n/4 - j - 1$ lines of $\Lfour_z$, 
so $n - j - 1$ lines in total. If $j < 5n/8$ then we have $n - j - 1 > 3n/8 - 1$ 
and, hence, $n - j - 1 \geq 3n/8$.
So we may assume $j \geq 5n/8$. Observe that $\SE$ is incident 
to all $n/2$ lines in $\Lfour_x \cup \Lfour_y$, and at least $j - n/2 \geq n/8$ lines of $\Lfour_z$.
Since $h_3$ contains a line of $\Lfour_x$, the number of incidences in $\SEt$ and $\SEb$ is 
at least $2(n/8) + n/2 - 1 = 3n/4 - 1$, implying that at least one of $\SEt$ 
and $\SEb$ is incident to at least $$ \left\lceil\frac{3n}{8} - \frac{1}{2}\right\rceil = \frac{3n}{8}$$ lines.

Now suppose $n/4 \leq k < n/2$, i.e., $h_3$ meets some line of $\Lfour_y$.  
$\SWt$ meets $i$ lines of $\Lfour_z$ and all $n/4$ lines of $\Lfour_x$, 
i.e., $\SWt = i + n/4$. When $i \ge n/8$ (recall that $n$ is a multiple of~$8$),  
we get $\SWt \ge 3n/8$, so we can assume $i < n/8$. Observe that $\SE$ is incident 
to all $n/2$ lines in $\Lfour_x \cup \Lfour_y$, and 
$$\frac{n}{4} - i-1 > \frac{n}{4} - \frac{n}{8} - 1 = \frac{n}{8} - 1,$$ that is, 
at least $n/8$ lines of $\Lfour_z$. Thus the number of incidences in $\SEt$ and
$\SEb$ together is at least $$2\left(\frac{n}{8}\right) + \frac{n}{2} - 1 = \frac{3n}{4} - 1,$$ implying 
that at least one of $\SEt$ and $\SEb$ meets at least $3n/8$ lines.
\medskip

\noindent\emph{Case 3:  $n/4 \leq i < n/2$ and $n/4 \leq j < n/2$; 
see Fig.~\ref{fi:unbalancedlb-1}(iii).}
\\[2mm]
Suppose also that $0 \leq k < n/4$ (i.e., $h_3$ contains a line of $\Lfour_x$). 
Then, all but one of the lines of $\Lfour_y$ (so $n/4 - 1$) are incident to 
either $\NEb$ or $\SWb$. Additionally, $\NEb$ and $\SWb$ are each incident to $n/2$ 
lines of $\Lfour_z$. That is, number of incidences in $\NEb$ and 
$\SWb$ is at least $n/4 - 1 + n/2 = 3n/4 - 1$. 
It follows that at least one of these cells is incident to $3n/8$ lines.

Now suppose $n/4 \leq k < n/2$. i.e., $h_3$ contains a line of $\Lfour_y$.
Then, all but one of the lines of~$\Lfour_x$ (so $n/4 - 1$) are incident to 
either $\NEt$ or $\SWt$. Additionally, $\NEt$ and $\SWt$ are each incident to $n/2$ lines of $\Lfour_z$.
Then the number of incidences in $\NEt$ and $\SWt$ is at least $n/4 - 1 + n/2 = 3n/4 - 1$. 
It follows that at least one of these cells is incident to $3n/8$ lines.
\medskip

 We conclude that all three cases give the desired number of incidences, which
 finishes the proof of the lemma.
\end{proof}
To finish the proof of Theorem~\ref{th:unbalancedlb} it remains to deal with
axis-parallel splitting planes that need not be mutually orthogonal.
\begin{lemma}\label{lem:4b-lb-par}
For any $n$ divisible by~8, we have $\gpar(\Lfour(n)) = 3n/8 -1$.
\end{lemma}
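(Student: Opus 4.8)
The plan is to show $\gpar(\Lfour(n)) = 3n/8 - 1$ in two parts. The upper bound $\gpar(\Lfour) \leq 3n/8 - 1$ follows from an explicit choice of three axis-parallel planes: two parallel planes perpendicular to one axis, cleverly placed so that the "middle slab" they create captures a balanced share of every bundle, plus one more plane perpendicular to a second axis. I would verify by direct inspection of Fig.~\ref{fi:unbalancedlb} that a suitable choice (by analogy with the $\gperp$-optimal choice $i=j=3n/8$, $k=n/4$, but now exploiting that two of the planes may be parallel to shave off one further line) leaves every open cell meeting at most $3n/8-1$ lines. By Lemma~\ref{lem:4b-lb-perp} we already have the matching lower bound $\gperp(\Lfour)=3n/8$, and since $\gpar \leq \gperp$, it remains only to prove $\gpar(\Lfour) \geq 3n/8 - 1$.

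For the lower bound, I would argue by contradiction: suppose $H=\{h_1,h_2,h_3\}$ consists of axis-parallel (but not necessarily orthogonal) planes such that every open cell of $\A(H)$ meets at most $3n/8 - 2$ lines. As in the proof of Lemma~\ref{lem:3b-lb-par}, I first dispose of the degenerate configurations. If all three planes are mutually orthogonal, Lemma~\ref{lem:4b-lb-perp} already gives a cell with $3n/8$ lines, a contradiction. If all three are parallel to a common coordinate axis, then every line of the bundle parallel to that axis crosses all four columns, and a counting/pigeonhole argument over the (at most four, or seven if we also allow a third parallel plane) columns forces some cell to meet at least $\lceil (3n/4)/\cdot\rceil$ many lines, again a contradiction — here I would use the incidence counts $|\Lfour_x|=|\Lfour_y|=n/4$, $|\Lfour_z|=n/2$ exactly as in the $\gperp$ proof. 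The remaining case is exactly two parallel planes, say $h_1 \parallel h_2$ perpendicular to one axis and $h_3$ perpendicular to a second axis; this splits $\Reals^3$ into six columns, and $h_3$ cuts... wait, $h_1, h_2$ already create the six cells in 3D since there is no third cutting direction — so $\A(H)$ has six cells. I would then run the projection argument onto the plane spanned by the two relevant normals, obtaining six cells \NE, \NM, \NW, \SE, \SM, \SW, and extract inequalities on the number of lines of each bundle falling into the middle column exactly as in Lemma~\ref{lem:3b-lb-par}, but now tracking the asymmetric bundle sizes.

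The heart of the argument is the two-parallel-planes case, and within it the "middle slab" bookkeeping. The key steps there: (1) because $h_1,h_2$ are parallel, the bundle orthogonal to them contributes a full line to every one of the six cells touching its direction; (2) writing $a,b,c$ for the numbers of lines of the three bundles in the \NM (or \SM) cell, and using that the two outer columns cannot exceed $3n/8-2$ each, derive lower bounds of the form $a \geq \alpha n - O(1)$, etc.; (3) sum these to exceed $3n/8 - 2$, contradicting the assumption. I would need to be careful about which coordinate axis each of $h_1, h_2$ and $h_3$ is perpendicular to — by the symmetry of $\Lfour$ noted after its definition (any axis-parallel projection looks like one of those in Fig.~\ref{fi:unbalancedlb}), the number of essentially distinct subcases is small, but because $\Lfour$ is less symmetric than $\Ltb$ (the $z$-bundle is twice as big and split into two groups $[0,n/4) \cup [n/2,3n/4)$), I expect two or three genuinely different subcases rather than one.

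The main obstacle I anticipate is the casework in this final scenario: unlike the three-bundle configuration, $\Lfour$ has a doubled bundle and a "gap" in the $z$-bundle's index range, so the middle-column counting must separately account for how $h_1$ and $h_2$ (and the slab between them) interact with the two index-intervals $[0,n/4)$ and $[n/2,3n/4)$ of $\Lfour_z$. Getting the constants to land exactly at $3n/8-1$ (rather than, say, $3n/8-2$ or $3n/8$) — and in particular understanding precisely why the "$-1$" appears for $\gpar$ but not for $\gperp$ — is the delicate part, and I would expect it to come down to the fact that in the two-parallel-planes configuration one bundle's line can be "hidden" inside one of the slab-defining planes, exactly one line's worth of slack, whereas with three mutually orthogonal planes the extra degree of freedom is unavailable. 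I would structure the write-up to mirror Lemma~\ref{lem:3b-lb-par} as closely as possible, reusing its projection-to-a-coordinate-plane framework verbatim and only changing the bundle sizes and the resulting arithmetic.
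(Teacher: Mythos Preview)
Your high-level structure is right: dispose of the three-orthogonal case via Lemma~\ref{lem:4b-lb-perp}, dispose of three mutually parallel planes by a crude count, and then do the real work in the ``two parallel, one transverse'' case over six cells.  Two concrete issues, however, would block the sketch from going through as written.

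First, the proposed ``middle-column bookkeeping, mirroring Lemma~\ref{lem:3b-lb-par} verbatim with only the arithmetic changed'' is too optimistic.  In Lemma~\ref{lem:3b-lb-par} the three bundles are the same size and every line of the bundle orthogonal to the parallel pair lands in all three columns, which is what makes the middle-column count blow up.  For $\Lfour$ the bundle sizes are $n/4,n/4,n/2$, and depending on which axis the parallel pair is orthogonal to, the total number of cell--line incidences over the six cells is only about $7n/4$ (pair $\perp x$, third $\perp y$) or $2n$ (pair $\perp x$, third $\perp z$), giving averages of $7n/24$ and $n/3$ --- far below $3n/8$.  So neither a global pigeonhole nor an ``outer cells are small, hence middle is large'' argument can reach $3n/8-1$ in those subcases.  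The paper instead identifies, in each subcase, a carefully chosen \emph{triple} of cells (e.g., $\SW$, $\NE$, and the middle-south cell in one subcase) whose incidence counts sum to exactly $9n/8-3$, forcing one of the three to meet $3n/8-1$ lines; finding the right triple uses the specific interleaving of the two halves of $\Lfour_z$ with $\Lfour_x$ and $\Lfour_y$, and is different in each subcase.

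Second, and relatedly, the three subcases (pair $\perp x$ with third $\perp y$; pair $\perp x$ with third $\perp z$; pair $\perp z$ with third $\perp x$) are not reducible to one another by the symmetries of $\Lfour$, and the paper's arguments for them are genuinely distinct --- the first two each need a preliminary reduction on the position of one plane before the triple-of-cells sum can be set up, while the third is the only one where a straight pigeonhole over three cells works immediately.  Your intuition that the ``$-1$'' comes from a single line absorbed into a splitting plane is morally right, but in the proof it emerges from the $-3$ in the sum $9n/8-3$ divided over three cells, and the paper verifies the matching upper bound $3n/8-1$ by exhibiting explicit plane positions in the first two subcases.
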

\begin{proof}
We first observe that the case of all three planes being orthogonal to the same axis 
is not interesting: Such planes would cut each line of one of the sets $\Lfour_x,\Lfour_y,\Lfour_z$ 
into four pieces, producing a total of at least 
$$4\cdot\frac{n}{4}+\frac{n}{4}-1+\frac{n}{2}-1=\frac{7n}{4}-2$$ line-cell incidences, 
so at least one of the four cells would meet at least $$\frac{1}{4}\left(\frac{7n}{4}-2\right)=\frac{7n}{16}-\frac12>\frac{3n}{8}-1$$ lines.
Hence, we will focus on partitions where two of the planes are parallel. Up to a 
permutation and reorientation of the coordinates, it is sufficient to consider the 
three cases illustrated in Figure~\ref{fi:unbalancedlb-2}. 
As before, it is safe to assume that 
each of the three planes contains a line of~$\Lfour$. 

We need prove that in each of the three cases, there is a cell that meets at least $3n/8 - 1$ lines,
and that this bound can also be achieved as an upper bound in at least one of the cases.
\begin{figure}
  \begin{center}
    \includegraphics[width=1.0\textwidth]{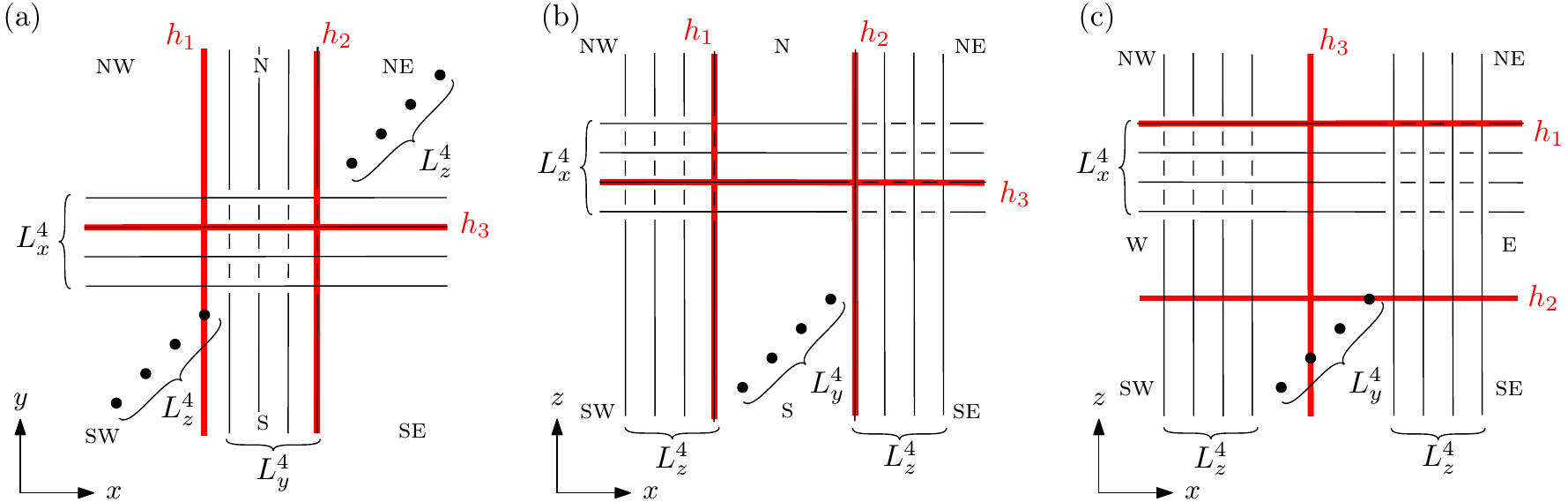}
  \end{center}
  \caption{Different cases in the proof of Lemma\protect\ref{lem:4b-lb-par}. 
           For Case~(a) the projection onto the $xy$-plane is shown, viewed from $z=+\infty$. 
           For Cases~(b)~and~(c) the projection onto the $xz$-plane is shown, viewed from $y=+\infty$.}
  \label{fi:unbalancedlb-2}
\end{figure}
\medskip

\noindent \emph{Case (a)\textnormal{:} $h_1$ and $h_2$ are parallel to the $yz$-plane and $h_3$ is parallel to the $xz$-plane.}
\\[2mm]
We label the cells as \NW, \north, \NE, \SW, \south, and \SE. Suppose there are $i$~lines 
from~$\Lfour_y \cup \Lfour_z$ strictly to the left of~$h_1$, and $j$ lines from $\Lfour_y \cup \Lfour_z$ 
strictly to the right of $h_2$, and $k$ lines from $\Lfour_x \cup \Lfour_z$ strictly above $h_3$. 
Due to the symmetry in the configuration, we may assume that the number of lines from
$\Lfour_x$ below~$h_3$ is at least the number of lines from $\Lfour_x$ above~$h_3$.
We can assume that $h_3$ contains a line of $\Lfour_x$ (and not a line of $\Lfour_z$), 
since otherwise $$\SW + \south + \SE = 3\left(\frac{n}{4}\right) + \frac{n}{4} + \frac{n}{4} - 2 = \frac{5n}{4} - 2,$$ 
implying that at least one of these cells contains $$\left\lceil\frac{5n}{12} - \frac{2}{3}\right\rceil \geq \frac{3}{8}n - 1$$ lines.
Hence, from now on we assume~$0\leq k < 3n/8$. 

Note that $\SW$ is incident to $n/2 - k -1 \geq n/8$ lines of $\Lfour_x$
and $i$ lines of $\Lfour_y\cup \Lfour_z$. If $i \geq n/4$, then this implies 
$\SW \geq n/8 + i \ge 3n/8$ and so we are done. 
Similarly, if $j < n/4$, then $\south$ is incident to at least $n/8$ lines of $L_x$
and $n/4$ lines of $L_y$, for a total of $3n.8$ incidences, and we are done as well.
Hence, from here on we assume that $i < n/4$ (so $h_1$ contains a line of $\Lfour_z$),
and that $j \geq n/4$ (so $h_2$ contains a line of $\Lfour_y$). 
Observe that $\SW, \NE$ are incident to $i + j$ lines of $\Lfour_y\cup \Lfour_z$, and $n/4-1$ lines of $\Lfour_x$,
that is,  $$\SW + \NE = i + j + \frac{n}{4}-1. $$ 
Moreover, $\south$ is incident to at least $n/8$ lines of $\Lfour_x$,
to $n/2 - j-1$ lines of $\Lfour_y$, and to $n/4 - i-1$ lines of $\Lfour_z$ 
giving $$\south = \frac{n}{8} + \left(\frac{n}{2} - j - 1\right) + \left(\frac{n}{4}  - i -1\right)= \frac{7n}{8} - (i + j) - 2.$$ 
By combining these two inequalities we see that $$\SW + \NE + \south = \frac{9n}{8} - 3,$$
implying one of the three cells is incident to at least $3n/8 - 1$ lines.

This finishes the lower-bound proof for Case~(a). Note that we can also achieve an upper bound
of $3n/8 - 1$, by setting $k = 3n/8 - 1$, $i = n/4 -1$, and $j = n/4$. 
\medskip

\noindent \emph{Case (b)\textnormal{:} $h_1$ and $h_2$ are parallel to the $yz$-plane and 
      $h_3$ is parallel to the $xy$-plane.}
\\[2mm]
As before, we label the cells as \NW, \north, \NE, \SW, \south, and \SE. Suppose there 
are $i$ lines from~$\Lfour_y \cup \Lfour_z$ strictly to the left of $h_1$, and 
$j$~lines from $\Lfour_y \cup \Lfour_z$ strictly to the right of $h_2$, 
and $k$ lines from $\Lfour_x \cup \Lfour_y$ strictly above $h_3$.

Each of the lines in $\Lfour_y$ and $\Lfour_z$ are incident to exactly one of $\SW, \south, \SE$, 
except for the two lines contained in $h_1$ and in $h_2$. If $k < n/8$, then 
at least $n/8$ of the lines in $\Lfour_x$ are incident to each of the 
cells $\SW, \south, \SE$, implying  
$$\SW + \south + \SE = 3\left(\frac{n}{8}\right) + \frac{3n}{4} - 2 = \frac{9n}{8} - 2.$$ 
Hence, one of $\SW, \south, \SE$ is incident to at least $\ceil{3n/8 - 2/3} = 3n/8$ lines,
and we are done. From here on, we therefore assume $k \geq n/8$. 

Each line in $\Lfour_x$ is incident to either of $\SW$ or $\NW$, except for the line 
contained in $h_3$. If $i \geq n/4$ then $n/4$ lines in $\Lfour_z$ 
are incident to both $\SW$ and $\NW$, and so  
$$\SW + \NW = 2\left(\frac{n}{4}\right) + \frac{n}{4} - 1 = \frac{3n}{4} - 1.$$
It follows that $\SW$ or $\NW$ must be incident to at least $3n/8$ lines,
and we are done again. So we can assume $i < n/4$. By symmetry, we also can 
assume $j < n/4$.

Each of the $k$ lines of $\Lfour_x$ above $h_3$ are incident to both $\NW$ and $\NE$.
The cells $\NW$ and $\NE$ are also incident to $i$ and $j$ lines from $\Lfour_z$, respectively, 
and so $\NW + \NE = 2k + i + j$. Combining this with $\NW + \NE \leq 3n/4 - 2$, 
we get that $i + j \leq 3n/4 - 2k - 2$. On the other hand, $\south$ is incident 
to $n/2 - k - 1$ lines of $\Lfour_x \cup \Lfour_y$ and $n/2 - (i+j) - 2$ lines of $\Lfour_z$.
Hence,
\begin{align*}
\south & =
    \left(\frac{n}{2} - k - 1\right) + \left(\frac{n}{2} - (i+ j)-2\right) \\
    & = n - k - (i + j) -3 \\
    & \geq n - k - \left(\frac{6n}{8} - 2k - 2\right) - 3 \\
    & =  \frac{n}{4} + k - 1 \\
    & \geq \frac{3n}{8} - 1.
\end{align*}
This finishes the proof of the lower bound for Case~(b). Note that also for this case
we can achieve an upper bound of $3n/8 - 1$ lines, by setting $k = n/8$ and $i = j = n/4 - 1$.
\medskip

\noindent \emph{Case (c)\textnormal{:} $h_1$ and $h_2$ are parallel to the $xy$-plane 
      and $h_3$ is parallel to the $yz$-plane.}
\\[2mm]
Label the cells as $\NW, \west, \SW, \NE, \east$, and $\SE$. Let $k$ be the number of lines 
in $\Lfour_y \cup \Lfour_z$ that are strictly to the left of~$h_3$. By symmetry,
we may assume that $k \geq 3n/8$. Now each of the cells $\NW, \west, \SW$ are 
incident to~$n/4$ of the lines from $\Lfour_z$. All $n/4$ lines in $\Lfour_x$ 
and at least $n/8$ lines in $\Lfour_y$ are incident to exactly one of these cells, 
except for two that might be contained in $h_1$ and $h_2$. Hence, 
$$\NW + \west + \SW \geq 3\left(\frac{1}{4}n\right) + \frac{n}{8} + \frac{n}{4} - 2 = \frac{9n}{8} - 2,$$ 
implying one of these cells is incident to at least $\ceil{3n/8 - 2/3} = 3n/8$ lines.

This finishes the lower-bound proof for Case~(c) and, hence, the proof of the lemma.
\end{proof}

\section{Upper bounds}
\label{sec:upper-bounds}
We now prove upper bounds on $\gperp(n)$ and $\gpar(n)$.
More precisely, we present algorithms that produce, for any set $L$ of $n$ lines,
a decomposition of the type under consideration, with each cell intersected 
by at most a certain number of the lines.

Let $L := L_x \cup L_y\cup L_z$ be a set of axis-parallel lines in $\Reals^3$,
where $L_x,L_y,L_z$ denote the subsets parallel to the $x$-, $y$-, and $z$-axis, respectively. We say $L_x$ is in \emph{general position} if no pair of distinct lines in $L_x$ share $y$- or $z$-coordinates; we define general position for $L_y$ and $L_z$ similarly. We say $L$ is in general position if $L_x$, $L_y$, and $L_z$ are in general position, and no two lines in~$L$ have a common point. We first argue that, for the purposes of upper bounds, it suffices to restrict our attention to sets in general position.

\begin{lemma}
\label{lem:generalposition}
Let $L = L_x \cup L_y\cup L_z$ be a set of axis-parallel lines not in general position. Then there exists a set $L' = L'_x \cup L'_y\cup L'_z$ of axis-parallel lines in general position such that
\[\gpar(L)\leq \gpar(L') \quad \mbox{ and } \quad \gperp(L)\leq \gperp(L'). \]
\end{lemma}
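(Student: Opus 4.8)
The plan is to reduce the set $L$ to one in general position by a sequence of small perturbations, each of which does not decrease $\gpar$ or $\gperp$. First I would observe that it suffices to show that any single ``degeneracy'' — two lines of $L_x$ sharing a $y$- or $z$-coordinate, or symmetrically for $L_y,L_z$, or two lines of $L$ meeting in a point — can be removed by perturbing one of the offending lines by an arbitrarily small amount, without decreasing the two partition functions; iterating then produces a fully general-position set $L'$ after finitely many steps (the number of coincidences is finite and each step strictly decreases it while creating no new ones, if the perturbation is chosen generically and small enough). So the real content is the single-perturbation step.

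For the single step, fix an optimal axis-parallel partition $H = \{h_1,h_2,h_3\}$ for $L'$ (the perturbed set), witnessing $\gpar(L') = k$, and I would argue that $H$, or a slight modification of it, also works for $L$ with at most $k$ lines per open cell. The key point is a limiting/continuity argument: if $\ell'$ is the perturbed copy of a line $\ell \in L$ and $\ell' \to \ell$ as the perturbation parameter $\to 0$, then for any fixed open cell $C$ of $\A(H)$, if $\ell$ meets $C$ then so does $\ell'$ once the perturbation is small enough, because $C$ is open and $\ell \cap C \neq \emptyset$ is an open condition. Hence every open cell of $\A(H)$ meets at least as many lines of $L'$ as of $L$ (matching perturbed lines to originals), so the number of lines of $L$ meeting each open cell of $\A(H)$ is at most $k$. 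The direction of the inequality is exactly what we want: $\gpar(L) \leq k = \gpar(L')$. The same $H$ is used verbatim for the $\gperp$ statement since the modification never changes the orientations of the planes, only (possibly) translates them slightly, and in the mutually-orthogonal case we keep them orthogonal.

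One subtlety I would be careful about: a line $\ell \in L$ might be \emph{contained} in a splitting plane $h_i$ of the optimal $H$ for $L'$ (so it contributes $0$ to cell counts), while its perturbed copy $\ell'$ is not. To handle this cleanly I would, when choosing the perturbation, ensure the perturbed line stays in the same splitting plane whenever the original did — i.e. perturb \emph{within} the relevant plane when possible — or, alternatively, argue that we may assume no line of $L'$ lies in a splitting plane of its optimal partition unless forced (pushing a plane off a line it contains only decreases counts, by the remark made earlier in the paper for the lower-bound reductions, and a symmetric observation holds here). Actually the cleanest route is the opposite: start from an optimal $H$ for $L$, perturb $L$ to $L'$ so small that the combinatorial structure of which lines meet which open cell of $\A(H)$ is unchanged for every line \emph{not} previously contained in a splitting plane, and handle the finitely many lines contained in a splitting plane by choosing their perturbation to keep them in that plane — this is always possible since an axis-parallel plane either is orthogonal to the line's direction (then the line can't lie in it) or contains the line as a translate within a $1$-parameter family inside the plane. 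Then $\A(H)$ witnesses $\gpar(L') \leq \gpar(L)$, and combined with the easy reverse-type estimate...

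Wait — I should double check the direction. We want $\gpar(L) \le \gpar(L')$, i.e. a good partition for the general-position set transfers back to a possibly-degenerate set. So the correct framing is: take $L'$ optimal-partition $H'$; I claim $H'$ is also good for $L$. If $\ell \in L$ and $\ell'$ is its perturbed image, and $\ell$ meets open cell $C$ of $\A(H')$, I need $\ell'$ to meet $C$ too (so that the $L'$-count of $C$ dominates the $L$-count). This again needs the perturbation to be small relative to $H'$, but $H'$ depends on $L'$ — circular. The fix: the space of combinatorially-distinct axis-parallel partitions is controlled by $O(n)$ candidate coordinate values in each axis (determined by the lines), so there are only finitely many ``combinatorial types'' of optimal partition to consider, and for each we pick the perturbation small enough; take the minimum over this finite list. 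This is the step I expect to be the main obstacle — making the quantifier order (``for all small enough perturbations, for the optimal $H'$...'') rigorous — and I would resolve it exactly by this finiteness-of-candidate-partitions observation, which is standard for axis-parallel arrangements. Everything else is routine continuity bookkeeping, and the $\gperp$ case is identical word for word.
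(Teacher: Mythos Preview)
Your overall strategy---perturb $L$ to a general-position $L'$ and argue that an optimal partition for $L'$ is also good enough for $L$---matches the paper's, and you correctly identify the central difficulty: the circularity between the perturbation size and the (a~priori unknown) optimal partition $H'$. But your proposed resolution via ``finitely many combinatorial types of partition'' does not close the gap as stated. If the finite list of candidate plane positions is taken from the coordinates of $L'$, it depends on the perturbation and the quantifiers remain circular; if it is taken from the coordinates of $L$, then the optimal $H'$ for $L'$ need not be on the list, and you give no argument relating $H'$ to a nearby list member. Concretely: suppose $\ell_1,\ell_2\in L_x$ share $y$-coordinate~$5$ (a degeneracy) and are perturbed to $y$-coordinates $5+\epsilon_1<5+\epsilon_2$. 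An optimal $H'$ may well place a splitting plane at $y=5+\epsilon_1$ (it contains~$\ell'_1$). Then both $\ell_1,\ell_2$ lie strictly below this plane while $\ell'_1$ lies on it and $\ell'_2$ lies above it, so for the cell just below, the $L$-count strictly exceeds the $L'$-count---precisely the failure of your implication ``$\ell$ meets~$C\Rightarrow\ell'$ meets~$C$.'' No uniform choice of perturbation size prevents this, since the offending plane sits between the original and perturbed positions.

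The paper sidesteps the circularity with a device you did not propose: first rescale so that all relevant coordinates of lines in $L$ are integers; then perturb each line generically within a tube of radius~$1/3$ to obtain $L'$; finally, given \emph{any} triple $H'$ for $L'$, \emph{round} each plane of $H'$ to the nearest integer coordinate to obtain~$H$. The rounding target (the integer grid) is fixed independently of both $H'$ and the perturbation, so there is no circularity, and the verification is a one-line metric estimate: if $\ell\in L$ lies strictly on one side of the rounded plane~$h_i$ then, both having integer coordinates, their distance is at least~$1$, whereas the perturbation moves $\ell$ by less than~$1/3$ and the rounding moves $h'_i$ to $h_i$ by at most~$1/2$, so $\ell'$ remains strictly on the same side of~$h'_i$. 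Your argument can be salvaged along these lines (or by a compactness/semicontinuity argument on the space of partitions), but this snap-to-a-fixed-grid step is the missing ingredient.
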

\begin{proof}
Suppose that $L = L_x \cup L_y\cup L_z$ is a set of lines not in general position. Assume, without loss of generality, that the intersection points of lines in $L_x$ (resp. $L_y$, and $L_z$) with the plane $x = 0$ (resp. $y = 0$, and $z = 0$) have integer coordinates. We obtain $L'$ by a generic perturbation of the lines of $L$. More specifically, for each line $\ell \in L$, we let $\ell'$ be a generic line parallel to $\ell$ inside a tube of radius $1/3$ centered at $\ell$.

Consider a triple $H' = (h'_1, h'_2, h'_3)$ of axis-parallel splitting planes (which need not be mutually orthogonal). Suppose, without loss of generality, that $h'_1$ is orthogonal to the $x$-axis and is given by $x=\alpha'$ for some $\alpha' \in \Reals$. Let $h_1$ be the plane given by $x = \alpha$ where $\alpha$ is $\alpha'$ rounded to the nearest integer with ties broken arbitrarily; and define $h_2, h_3$ similarly. Let $H = (h_1, h_2, h_3)$ be the resulting triple of axis-parallel splitting planes. 

Let  $\mathcal{C}' \in \mathcal{A}(H')$, and let $\mathcal{C} \in \mathcal{A}(H)$ be the corresponding cell. (Since two or more planes in $H'$ may be ``rounded'' to the same plane in $H$, the cell $\mathcal{C}$ can be the empty set. In this case the following claim trivially holds.)
We claim that the number of incidences of $\mathcal{C}$ with lines of $L$ is at most the number of incidences of $\mathcal{C}'$ with lines of $L'$. 
This claim follows from the observation that, for each plane $h_i \in H$, the number of lines of $L$ lying strictly on one side of $h_i$ is upper bounded by the number of lines of $L'$ lying on the same side of the corresponding $h'_i \in H$. To see this, note that if $\ell' \in L'$ lies on $h_i$, on $h'_i$, or between them, then the corresponding line $\ell \in L$ lies on~$h_i$. 
\end{proof}

\subsection{Upper bounds on \texorpdfstring{$\gpar$}{gpar}}
We start with a simple observation.
\begin{observation}\label{obs:good-label}
If $\max\left(|L_x|,|L_y|,|L_z|\right) = m$ then 
there is a set $H$ of three axis-parallel planes (two of which are parallel) such that any cell in $\A(H)$
meets at most $(5n-m)/12$~lines from $L$. 
\end{observation}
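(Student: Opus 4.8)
The plan is to pick the partition that makes the \emph{largest} bundle essentially free. By relabelling the coordinate axes we may assume $|L_z|=m$ and, among the other two, $|L_x|\le|L_y|$; by Lemma~\ref{lem:generalposition} we may also assume that $L$ is in general position, so that no degenerate coincidences interfere when we place the splitting planes. Take $H=\{h_1,h_2,h_3\}$ with $h_1\colon x=a$ and $h_2\colon x=b$ the two parallel planes and $h_3\colon y=c$ the third, so that $\A(H)$ has six cells: three ``$x$-slabs'', each split by $h_3$ into an upper and a lower part. The point of this choice is the crossing pattern: a line of $L_x$ crosses $h_1$ and $h_2$ but is parallel to (or lies in) $h_3$, so it meets exactly three cells, all in one $y$-half; a line of $L_y$ crosses only $h_3$, so it meets two cells, both inside one $x$-slab; and a line of $L_z$ meets exactly one cell. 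Consequently every cell of $\A(H)$ is met by at most $(\#L_x\text{ on the relevant side of }h_3)+(\#L_y\text{ in the relevant slab})+(\#L_z\text{ in the relevant slab})$ lines.

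Next I would balance greedily. Choose $c$ to be a median of the $y$-coordinates of the lines of $L_x$ (placing $h_3$ on such a line when $|L_x|$ is odd), so that at most $\lfloor|L_x|/2\rfloor$ lines of $L_x$ lie strictly on each side of $h_3$. Choose $a<b$ so that the $|L_y|+|L_z|$ objects formed by the lines of $L_y$ and $L_z$, sorted by $x$-coordinate, are cut into three consecutive groups of size at most $\lceil(|L_y|+|L_z|)/3\rceil$. Substituting into the bound above, every cell of $\A(H)$ meets at most $\lfloor|L_x|/2\rfloor+\lceil(|L_y|+|L_z|)/3\rceil$ lines of $L$. For the real-valued version, using $|L_y|+|L_z|=n-|L_x|$ this equals $\tfrac{n}{3}+\tfrac{|L_x|}{6}$, and since $|L_x|\le|L_y|$ forces $|L_x|\le\tfrac{n-m}{2}$, it is at most $\tfrac{n}{3}+\tfrac{n-m}{12}=\tfrac{5n-m}{12}$, exactly as claimed.

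The construction and the one-line arithmetic are both easy; the only delicate point is the integer rounding, and that is where I expect the real work to lie. The ``naive'' bound $\lfloor|L_x|/2\rfloor+\lceil(|L_y|+|L_z|)/3\rceil$ can overshoot $\tfrac{5n-m}{12}$ by a small additive amount (the slack $\tfrac{|L_y|-|L_x|}{12}$ may be zero), essentially because it over-counts the $L_z$-contribution of a cell: a cell actually sees only the lines of $L_z$ lying in its slab \emph{and} on its side of $h_3$, i.e.\ roughly half of those in its slab. To recover the exact bound I would (i) allow $h_1$ or $h_2$ to pass through a line of $L_y\cup L_z$ whenever that reduces a slab count, and (ii) use the remaining freedom in $c$ — within the range of $y$-values that still bisect $L_x$ — to split the lines of $L_z$ of each slab roughly evenly as well, noting that perfect balancing would undershoot the target by at least $\tfrac{m}{6}\ge\tfrac{n}{18}$, so that a fairly crude simultaneous balancing already suffices. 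Carrying out this joint choice of $h_1,h_2,h_3$ is the main obstacle.
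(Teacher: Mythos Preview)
Your construction is exactly the paper's: after relabelling so that the smallest bundle plays the role of your~$L_x$, the paper also takes the two parallel planes orthogonal to that direction (trisecting the union of the other two bundles) and the third plane bisecting the smallest bundle. Your real-valued computation $n/3+|L_x|/6\le(5n-m)/12$ is identical to theirs.

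Where you diverge is the last paragraph, and there you are making the problem harder than it is. Your option~(i) alone already closes the gap: if you place $h_1$ and $h_2$ each through a line of $L_y\cup L_z$ (just as you placed $h_3$ through a line of $L_x$), the slab bound drops from $\lceil(|L_y|+|L_z|)/3\rceil$ to $\lceil(|L_y|+|L_z|-2)/3\rceil\le(|L_y|+|L_z|)/3$, and since $\lfloor|L_x|/2\rfloor\le|L_x|/2$ as well, the integer bound is at most the real-valued bound $(5n-m)/12$ with no overshoot whatsoever. This is precisely what the paper does. Your option~(ii)---the joint balancing that exploits the finer fact that a cell sees only the $L_z$-lines on its own side of $h_3$---is not needed, and the ``main obstacle'' you anticipate does not exist.
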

\begin{proof}
  Assume without loss of generality that $L_z$ is the smallest of the three sets.  By assumption, $|L_z|\leq \floor*{(n-m)/2}$.
  Partition $L_x\cup L_y$ into three equal-size subsets using two planes $h_1,h_2$ parallel to the
  $xy$-plane, and partition $L_z$ into two equal-size subsets using a plane $h_3$ parallel to the $yz$- or $xz$-pane.  As in earlier arguments, we can always choose the planes so that they contain a line of~$L$.
  Set $H := \{ h_1,h_2,h_3\}$.
  Then the number of lines each of the six cells in $\A(H)$ meets is at most
  \begin{align*}
    \ceil*{\frac{|L_x|+|L_y|-2}{3}} + \ceil*{\frac{|L_z|-1}{2}} & \leq
    \frac{|L_x|+|L_y|}{3} + \frac{|L_z|}{2} \\
    & = \frac{n - |L_z|}{3} + \frac{|L_z|}{2}\\
    & \leq \frac{n}{3} + \frac{\floor*{(n-m)/2)}}{6}\\
    & \leq \frac{5n-m}{12}. \qedhere
  \end{align*}
\end{proof}
The following theorem gives an upper bound on~$\gpar$.
\begin{theorem}\label{th:7nover18}
For any set $L$ of $n$ axis-parallel lines in $\Reals^3$, there is
a set $H$ of three axis-parallel planes (two of which are parallel) such that any cell in $\A(H)$
meets at most $7n/18$ lines from $L$. Hence, $\gpar(n) \leq \floor{7n/18}$. 
\end{theorem}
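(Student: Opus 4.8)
The goal is to beat the bound $(5n-m)/12$ from Observation~\ref{obs:good-label}, which is only good when some coordinate class is large. The complementary regime is when all three of $|L_x|,|L_y|,|L_z|$ are close to $n/3$; there we want a different construction. I would split on $m := \max(|L_x|,|L_y|,|L_z|)$. If $m \geq 4n/9$, then $(5n-m)/12 \leq (5n - 4n/9)/12 = (41n/9)/12 = 41n/108$, which is less than $7n/18 = 42n/108$, so Observation~\ref{obs:good-label} already gives the bound. So the main work is the case $m < 4n/9$, i.e. all three classes have size in roughly $(n/9, 4n/9)$ — in particular each class is nonempty and reasonably balanced.

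**The balanced case.** Here I would use all three splitting planes in a coordinated way, one axis-parallel plane ``cutting'' each class. By Lemma~\ref{lem:generalposition} we may assume $L$ is in general position. Take $h_z$ parallel to the $xy$-plane that bisects $L_z$ (so at most $\lceil(|L_z|-1)/2\rceil$ lines of $L_z$ strictly on each side, and it contains one line of $L_z$), and similarly $h_x$ parallel to the $yz$-plane bisecting $L_x$, and $h_y$ parallel to the $xz$-plane bisecting $L_y$. These three planes are mutually orthogonal, giving eight octants. A line of $L_z$ is parallel to $h_z$ so lies entirely above or below it, and crosses $h_x$ and $h_y$, hence meets exactly two of the eight octants (one above $h_z$, one below — no: it meets two octants on its side of $h_z$). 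Let me recount: a line $\ell \in L_z$ is vertical; $h_z$ is horizontal so $\ell$ is on one side of $h_z$; $h_x, h_y$ are vertical planes, generically $\ell$ is strictly on one side of each, so $\ell$ lies in exactly one of the four vertical columns and one horizontal slab, hence meets exactly one octant — wait, it's a full line, so it passes through the column on both sides... no, a vertical column is the intersection of two halfspaces bounded by $h_x$ and $h_y$, which are vertical, so the column is an infinite vertical prism, and $\ell$ vertical lies in exactly one such prism, and on one side of $h_z$, so $\ell$ meets exactly one octant. Good: each line meets exactly one octant (for a class parallel to the ``bisected'' direction it's clean). Then the count per octant of its own-class lines is at most $\lceil |L_z|/4 \rceil$ or so, but cross-class lines (those of $L_x$, $L_y$ passing through that octant) contribute more. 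I would need to count, for a fixed octant, the worst-case total: roughly $|L_z|/4$ from $L_z$ plus up to $|L_x|/2 + |L_y|/2$ from the other two classes. With $|L_x|,|L_y| < 4n/9$ and $|L_z| = n - |L_x| - |L_y|$, optimize: the bound per octant is at most $\tfrac{1}{4}|L_z| + \tfrac{1}{2}(|L_x|+|L_y|) = \tfrac14(n-|L_x|-|L_y|) + \tfrac12(|L_x|+|L_y|) = \tfrac{n}{4} + \tfrac14(|L_x|+|L_y|)$, which is maximized when $|L_x|+|L_y|$ is as large as possible, i.e. close to $8n/9$, giving $\tfrac{n}{4} + \tfrac{2n}{9} = \tfrac{17n}{36} > \tfrac{7n}{18}$ — too weak. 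So the naive symmetric cut is not enough; I need the third plane to help the cross-class lines too, or I need to argue that not all octants can be simultaneously bad.

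**Refinement and the main obstacle.** The fix is to not bisect all three, but to choose the planes more cleverly: e.g. cut the largest remaining class with two parallel planes (into thirds) and one plane for the rest, and interleave with Observation~\ref{obs:good-label}-style arguments on the subcases $m < 4n/9$ broken further (say $m \in [n/3, 4n/9)$ versus $m < n/3$, the latter being impossible since the classes sum to $n$, so actually $m \geq n/3$ always and the relevant window is just $m \in [n/3, 4n/9)$). In that narrow window I would pick the two parallel planes to split the union of the two largest classes, say $L_x \cup L_y$ with $|L_x| + |L_y| = n - |L_z| \in (5n/9, 2n/3]$, into three parts by planes parallel to... here's the subtlety: $L_x$ and $L_y$ have different orientations, so two parallel planes can't simultaneously be ``bisecting'' planes for both. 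Instead, two planes parallel to the $xy$-plane cut each $L_z$ line into pieces but are parallel to neither $L_x$ nor $L_y$; two planes parallel to the $yz$-plane bisect-ish $L_x$ (parallel to it, so each $L_x$-line is in one slab) but cut every $L_y$ and $L_z$ line. I think the right move is: use two planes parallel to $yz$ partitioning $L_x \cup L_z$ cleverly and one plane parallel to $xz$ for $L_y$; count cell incidences as (own contributions from the partition of $L_x \cup L_z$ into three) plus ($L_y$ split in two), getting roughly $\tfrac13(|L_x|+|L_z|) + \tfrac12|L_y|$ — but that's exactly Observation~\ref{obs:good-label} with roles permuted, bounded by $(5n-m')/12$ where $m'$ is now the max of $|L_y|$-vs-other... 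The real content of Theorem~\ref{th:7nover18} must be a genuinely new argument for when $|L_x|,|L_y|,|L_z|$ are all close to $n/3$: there, $(5n-m)/12$ with $m = n/3$ gives exactly $(5n - n/3)/12 = 7n/18$, so equality is achieved precisely at the balanced point and Observation~\ref{obs:good-label} alone gives the theorem! Let me re-examine: if $m \geq n/3$ always (since the three classes partition $n$), then $(5n-m)/12 \leq (5n - n/3)/12 = (14n/3)/12 = 7n/18$. \textbf{So Observation~\ref{obs:good-label} directly implies Theorem~\ref{th:7nover18}.} The main obstacle is therefore nonexistent once one notices $\max(|L_x|,|L_y|,|L_z|) \geq n/3$; the only care needed is the floor: $(5n-m)/12$ is an upper bound on an integer count, so we get $\le \lfloor 7n/18 \rfloor$, and one should double-check the ceilings in Observation~\ref{obs:good-label}'s proof don't erode this — they don't, because that proof already absorbs them. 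So the ``proof'' is: apply Observation~\ref{obs:good-label} with $m := \max(|L_x|,|L_y|,|L_z|) \geq \lceil n/3 \rceil$, and conclude the number of lines met by any cell is at most $(5n-m)/12 \leq (5n - n/3)/12 = 7n/18$; since this count is an integer it is at most $\lfloor 7n/18 \rfloor$.
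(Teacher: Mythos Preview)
Your proposal is correct and, after the exploratory detours, lands on exactly the paper's proof: since $\max(|L_x|,|L_y|,|L_z|)\geq \lceil n/3\rceil$, Observation~\ref{obs:good-label} with $m=\lceil n/3\rceil$ gives a cell bound of at most $(5n-n/3)/12=7n/18$, hence $\lfloor 7n/18\rfloor$ since the count is an integer. The paper's proof is just those two sentences; all your case analysis and the attempted symmetric-bisection construction were unnecessary.
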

\begin{proof}
Define $L_x$, $L_y$, and $L_z$ as above.
The largest of the groups has size at least $\ceil{n/3}$.  The theorem now follows from Observation~\ref{obs:good-label} with $m=\ceil{n/3}$. 
\end{proof}

\subsection{Upper bounds on \texorpdfstring{$\gperp$}{gperp}}

\begin{theorem}
\label{thm:512upper-orig}
For any set $L$ of $n$ axis-parallel lines, there is
a set $H$ of three planes, one
orthogonal to each axis direction, such that any cell in $\A(H)$
meets at most $\ceil{5n/12}$ of the lines; in other words, $\gperp(n)\leq \ceil{5n/12}$.
\end{theorem}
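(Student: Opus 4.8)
The plan is to find three mutually orthogonal axis-parallel planes — one $h_1$ orthogonal to the $x$-axis, one $h_2$ orthogonal to the $y$-axis, one $h_3$ orthogonal to the $z$-axis — such that each line of $L$ crosses at most a controlled number of the eight octants, and no octant gets too many. By Lemma~\ref{lem:generalposition} I may assume $L$ is in general position. Write $a=|L_x|$, $b=|L_y|$, $c=|L_z|$, so $a+b+c=n$. The key structural fact is that a line parallel to the $z$-axis is disjoint from $h_3$ and hence lies in exactly one of the four ``columns'' determined by $h_1,h_2$ (equivalently, it hits exactly two octants — one above $h_3$, one below); similarly for $L_x$ and $L_y$. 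So a line in $L_z$ contributes to $2$ octants, and its ``footprint'' in the $xy$-projection is a single point lying in one of the four quadrants $\NE,\NW,\SE,\SW$.

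**First I would** set up the counting. The plane $h_3=\{z=\zeta\}$ splits $L_x\cup L_y$ into an ``above'' part and a ``below'' part; by choosing $\zeta$ through a line of $L_x\cup L_y$ we can make the split as balanced as we like, say $\lceil(a+b)/2\rceil$ above and the rest below (minus the one line contained in $h_3$, which disappears). The plane $h_3$ is crossed by every line of $L_z$ but by no line of $L_x\cup L_y$ other than those it separates; each $L_z$-line hits the octant above and the octant below in the same column. So the count in any octant is: (number of $L_z$-points in the corresponding quadrant) $+$ (number of $L_x\cup L_y$ lines on the corresponding side of $h_3$ whose projection-line crosses that quadrant). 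The first term is independent of which of the two octants in a column we pick; the second depends on top/bottom. Now I must choose $h_1=\{x=\xi\}$ and $h_2=\{y=\eta\}$ to simultaneously (a) balance the $c$ points of $L_z$ among the four quadrants — generically each quadrant can be made to hold about $c/4$ — and (b) control, for each quadrant, how many of the $a$ lines of $L_x$ (horizontal in projection) and $b$ lines of $L_y$ (vertical in projection) cross it. A line of $L_x$ crosses exactly two horizontally-adjacent quadrants ($\NE,\NW$ or $\SE,\SW$, depending on which side of $h_2$ it lies); a line of $L_y$ crosses two vertically-adjacent quadrants. So choosing $\eta$ to split $L_x$ into $\lceil a/2\rceil$ ``north'' and the rest ``south'', and $\xi$ to split $L_y$ into $\lceil b/2\rceil$ ``east'' and the rest ``west'', makes each quadrant receive roughly $a/2+b/2$ from $L_x\cup L_y$.

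**The difficulty** — and the reason $5n/12$ rather than $n/3$ — is that conditions (a) and (b) interact: one cannot in general choose $\xi,\eta$ to simultaneously quarter the point set $L_z$ in the projection and halve each of $L_x,L_y$ in the required directions, because the $L_z$-points are already committed to fixed locations in the $xy$-plane once the line set is given, while the constraint from (b) fixes the counts of lines separated by $h_1,h_2$. I would handle this by first using $h_1,h_2$ to balance the $L_z$-points — a 4-partition of a planar point set, achievable since $L_z$ in projection is just $c$ points in general position — giving $\le\lceil c/4\rceil$ in each quadrant; at that point the splits of $L_x$ by $h_2$ and of $L_y$ by $h_1$ are whatever they are, not necessarily balanced. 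In the worst case a quadrant could see all $b$ lines of $L_y$ and all $a$ lines of $L_x$, giving $\lceil c/4\rceil + a + b$, which is too big. So instead I would do a \emph{weighted} 4-partition: treat each $L_z$-point with weight proportional to $1$ but also track the one-dimensional marginals, or — cleaner — pick $h_1$ and $h_2$ by considering the $x$-coordinates of $L_y\cup L_z$ together and the $y$-coordinates of $L_x\cup L_z$ together, so that $h_1$ has about half of $|L_y|+|L_z|$ on each side and $h_2$ has about half of $|L_x|+|L_z|$ on each side. Then each quadrant sees at most about $(a+c)/2$ from the ``$L_x$ and $L_z$-in-$y$'' budget in one direction and $(b+c)/2$ from the other — but this double-counts $c$. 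A careful bookkeeping (the $L_z$-points are split into four groups of size $\le\lceil c/4\rceil$ by the intersection of the two halving conditions, using that halving in $x$ and halving in $y$ can be done independently on the same point set only up to the usual $+O(1)$, but here we also get to choose which half of $L_x,L_y$ goes which way) should yield: every octant meets at most
\[
\left\lceil \frac{c}{4}\right\rceil + \left\lceil\frac{a}{2}\right\rceil + \left\lceil\frac{b}{2}\right\rceil
\]
up to $O(1)$, and then, taking into account that the octant counts come in top/bottom pairs summing to (quadrant $L_z$-count)$\cdot 2$ plus the $L_x\cup L_y$ lines split by $h_3$, so that the \emph{maximum} over the two octants in a column is at most (quadrant $L_z$-count) $+ \max(\text{above},\text{below})$ with above$+$below $=$ (quadrant $L_x\cup L_y$ count), and the latter is balanced by $h_3$, one gets an extra factor of $\tfrac12$ on the $L_x\cup L_y$ contribution, landing at $\tfrac{c}{4}+\tfrac{a+b}{4}+\tfrac{a+b}{4}$... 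I will have to be careful, but the target $\tfrac{5n}{12}$ arises as the balance point: when $a=b=c=n/3$ the bound $\tfrac{c}{4}+\tfrac{a}{2}+\tfrac{b}{2}$ reading naively gives $\tfrac{n}{12}+\tfrac{n}{6}+\tfrac{n}{6}=\tfrac{5n}{12}$, and the worst case over all $(a,b,c)$ with $a+b+c=n$ of $\min$ over the three choices of ``small coordinate'' of the appropriate expression is exactly $\tfrac{5n}{12}$, matching Observation~\ref{obs:good-label}'s $(5n-m)/12$ with $m$ the largest group.

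**In summary**, the steps I would carry out, in order, are: (1) reduce to general position via Lemma~\ref{lem:generalposition}; (2) WLOG let $L_z$ be the smallest group, $|L_z|\le\lfloor n/3\rfloor$; (3) pick $h_3$ orthogonal to the $z$-axis bisecting $L_x\cup L_y$ (through a line of that set), so the two octants in each column split the $L_x\cup L_y$-lines roughly evenly; (4) pick $h_1$ (orthogonal to $x$) and $h_2$ (orthogonal to $y$) to simultaneously (approximately) 4-partition the projected points of $L_z$ \emph{and} balance, for each of the four quadrants, the number of $L_x$-lines and $L_y$-lines crossing it — using that an $L_x$-line crosses two $x$-adjacent quadrants and an $L_y$-line two $y$-adjacent quadrants, so a planar halving argument in each coordinate suffices, with the interaction term absorbed because the $L_z$-points only need quartering; (5) add up: each octant meets at most $\lceil |L_z|/4\rceil + \tfrac12\cdot(\text{the }L_x\cup L_y\text{ count in its column})$, and bounding the latter by $\tfrac12\cdot(a+b)/2$ via step (3) and optimizing over $(a,b,c)$ gives $\le\lceil 5n/12\rceil$. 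The main obstacle is step (4): making the 4-partition of the $L_z$-projection coexist with the directional balancing of $L_x$ and $L_y$ without losing more than the allowed additive constant, which I expect requires choosing $h_1,h_2$ greedily/coordinate-by-coordinate rather than invoking a black-box 4-partition, and then checking all four quadrant cases (as in the proof of Lemma~\ref{lem:4b-lb-perp}) that no octant exceeds $\lceil 5n/12\rceil$.
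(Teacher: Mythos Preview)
Two concrete gaps make the plan fail as written. First, two \emph{axis-parallel} lines cannot in general quarter a planar point set: for $c$ points on the diagonal $y=x$, every choice of a vertical $\ell_1$ and horizontal $\ell_2$ leaves some open quadrant with at least $\lceil c/3\rceil$ points, so ``$\le\lceil c/4\rceil$ per quadrant'' in step~(4) must be replaced by $\lceil c/3\rceil$. Second, the inference in steps~(3) and~(5) that a single plane $h_3$ globally bisecting $L_x\cup L_y$ also bisects the $L_x\cup L_y$-lines within each column is false: in the $\NE$ column those lines are exactly the northern~$L_x$ together with the eastern~$L_y$, and nothing prevents all of them from having $z$-coordinate above the global median while the southern~$L_x$ and western~$L_y$ lie below it. The honest per-octant bound your outline actually yields is therefore $c/3+(n-c)/2=n/2-c/6$, not $c/4+(a+b)/4=n/4$; the latter would in fact contradict the lower bound $\gperp(n)\ge 3n/8$ of Theorem~\ref{th:unbalancedlb}. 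With your choice ``$L_z$ smallest'' ($c\le n/3$) the honest bound is at least $4n/9>5n/12$; even taking $L_z$ \emph{largest}, it drops to $5n/12$ only once $c\ge n/2$.

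The paper's proof accordingly takes $|L_z|$ largest and splits into two cases. When $|L_z|\ge n/2$ the simple argument (three-partition the projected $L_z$-points with $\ell_1,\ell_2$; globally halve $L_x\cup L_y$ with~$h_3$) already gives $\le 5n/12$. For $n/3\le|L_z|<n/2$ it does something your plan has no analogue of: place $h_2$ so that the weighted incidence count $\south$ (each $L_x\cup L_y$-line contributing~2, each $L_z$-point~1) is just below $\floor{5n/3}/2$, then $h_1$ to halve $\south$; this forces three of the four columns to already meet at most $\lceil 5n/12\rceil$ lines. The plane $h_3$ is then \emph{dedicated} to bisecting the $L_x\cup L_y$-content of the single remaining column, and a separate counting claim ($\NE\le n-\SW$, combined with $\NE_z\le\floor{(n-1)/4}$) finishes that column. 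The ``main obstacle'' you flag---simultaneous quartering of $L_z$ and balancing of $L_x,L_y$---is not what the proof attempts; the decisive idea is to give up global balance for $h_3$ and spend it on one column.
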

\begin{proof}
Assume without loss of generality that $|L_z|\geq \max(|L_x|, |L_y|)$, and 
consider the projection of $L$ onto the $xy$-plane. In the projection the 
lines from $L_z$ become points, the lines from the other sets are still lines.
We denote the splitting planes orthogonal to the
$x$-, $y$-, and $z$-axis by $h_1$, $h_2$, and $h_3$, respectively.
We will first explain how we pick $h_1$ and $h_2$---in the projection
these correspond to splitting lines, which we denote by $\ell_1$ and 
$\ell_2$, respectively---and then finish the construction by placing~$h_3$.
We distinguish two cases.
\medskip

\noindent\emph{Case 1: $|L_z| \geq n/2$}.
\\[2mm]
  This is the easy case: we pick $\ell_1$ and $\ell_2$ such that each
  open quadrant contains at most $\floor{|L_z|/3}$ points, and we pick $h_3$ such that
  at most half the lines from $L_x\cup L_y$ are above $h_1$ and at most half are below.
  Thus each cell in the resulting decomposition intersects at most
  \[
  \floor*{\frac{|L_z|}{3}} + \floor*{\frac{|L_x|+|L_y|}{2}}
  \leq \frac{|L_x|+|L_y|+|L_z|}{3} + \frac{|L_x|+|L_y|}{6}    
  \leq \frac{5n}{12},
  \]
  lines, since $|L_x|+|L_y|+|L_z|=n$ and $|L_x|+|L_y| \leq n/2$.
\medskip

\noindent\emph{Case 2: $|L_z| < n/2$}.
\\[2mm]
  For two given splitting lines $\ell_1$ and $\ell_2$ in the $xy$-plane, we use
  $\NE$ to denote the number of lines in $L$ whose projection 
  (which can be a line or a point) intersects
  the open north-east quadrant defined by $\ell_1,\ell_2$,
  and we define $\SE$, $\SW$, and $\NW$ similarly. Let 
  $\north = \NW + \NE$, and define $\east, \south, \west$ similarly.
  Note that lines from $L_x$ that lie in the northern part are
  counted twice in $\north$, once for their intersection with the north-west quadrant 
  and once for their intersection with the north-east quadrant.
  Finally, we use
  $\NE_x$ to denote the number of lines from $L_x$ intersecting the north-east quadrant, 
  and we use $\NE_y$, $\south_z$, and so on, in a similar way.   Finally, let $\total = \NE + \NW + \SE + \SW$ denote the total number of incidences. 

Let $W(L) := 2(|L_x| + |L_y|) + |L_z| = n + |L_x| + |L_y|$, and note that
\[ \frac{3n+1}{2} \leq W(L) \leq \floor*{5n/3} \]
which follows from the facts that $\max(|L_x|, |L_y|) \leq |L_z|$ and $|L_z|<n/2$.
We will require $\ell_1$ and $\ell_2$ together each contain the projection of a line; note that such a line may subtract 1 from the total count~$\total$ (if its projection is a point), or 2 (if its
projection is a line). Hence,  
  \[
  \frac{3n - 7}{2} \leq W(L) - 4 \leq \total \leq W(L) - 2 \leq \floor{5n/3} - 2.
  \]
  We now explain how to pick $\ell_2$ (and, hence, $h_2$), the splitting line orthogonal 
  to the $y$-axis. Place $\ell_2$ at
  the highest $y$-coordinate where we still have $\south \leq \frac{\floor{5n/3}}{2}$. This is always possible since $\frac{\floor{5n/3}}{2} < \frac{3n - 7}{2}$ for all $n \geq 2$. 
  Now, we have 
  \[
  \frac{\floor{5n/3}}{2} - 1 \leq \south \leq \frac{\floor{5n/3}}{2}
  \]
  where the lower bound comes from the fact that moving $\ell_2$ could change the number of incidences by two (if $\ell_2$ contains a line of $L_y$). Furthermore, since $\south + \north = \total \leq \floor{5n/3}-2$, we have
   \[
   \north \leq \frac{\floor{5n/3}}{2}-1.
   \]
 We also assume, without loss of generality, that $\south_z \geq \north_z$. Indeed, if this is not the case, we can interchange 
  the roles of $\south$ and $\north$; simply enlarge $\north$ (by shifting $\ell_2$ down) 
  until the above inequalities hold in interchanged form.
  
  Now pick $\ell_1$ (and, hence, $h_1$) such that 
  $\floor{\south/2}\leq \SW\leq \ceil{\south/2}$ and
  $\floor{\south/2}\leq \SE\leq \ceil{\south/2}$.
  From now on we assume, without
  loss of generality, that $\NE \geq \NW$. Note that this implies that
  $\NW\leq \floor{\north/2}$.  
  Furthermore, we have 
  \[\SW,\SE \leq \ceil{\south/2} \leq  \ceil{\floor{5n/3}/4} \leq \ceil{5n/12}
  \]
   and  
   \[
   \NW\leq \floor{\north/2}\leq \ceil{(\floor{5n/3}-2)/4}<\floor{5n/12}.
   \] 
   Thus the corresponding
   ``columns'' in $\Reals^3$ already have the desired number of incidences.

   We choose the remaining splitting plane~$h_3$
   such that, within the north-eastern column, at most half the lines from
   $L_x\cup L_y$ are above $h_3$ and at most half are below (and one line is contained in $h_3$).
   We conclude that the number of lines intersected by each of the two cells 
   resulting from splitting this column by $h_3$ is at most
   \begin{equation}
   \label{eq:nebound}
   \floor*{\frac{\NE_x+\NE_y}{2}} +\NE_z  \leq \frac{\NE+\NE_z}{2}.
   \end{equation}
   To bound the expression in \eqref{eq:nebound}, we rely on the following.
   
  \begin{claiminproof} \label{claim:NE-orig}
  With $\ell_1, \ell_2$ as above, we have  $\NE \leq n - \SW$.
  \end{claiminproof}
  \begin{proofinproof}
  Since $\north_x = 2\,\NE_x$ we have
  \[
  \north_z = \north - \north_x - \north_y 
    = \north - 2 \, \NE_x - \north_y.
  \]
  Trivially, we also have
  \[
  \NE = \NE_x + \NE_y + \NE_z \leq \NE_x + \NE_y + \north_z.
  \]
  Combining these we get
  \begin{equation}
      \NE \leq 
        \NE_x+\NE_y + (\north - 2 \,\NE_x - \north_y) = 
        \north - \north_y + \NE_y -\NE_x.    \label{eq:NE-orig}
  \end{equation}
  Moreover, since $\SE_x = \east_x - \NE_x$ and $\NE_y=\SE_y$ we have
  \[
  \SE = \SE_x + \SE_y + \SE_z = (\east_x - \NE_x) + \NE_y + \SE_z,
  \] 
  which can be rewritten as
  \begin{equation}
     \NE_y - \NE_x =  \SE - \east_x - \SE_z.   \label{eq:NExplusy-orig}
  \end{equation}
   Note also that
   \begin{equation}
     \north = \total - \south = 2 \,\north_y + 2 \,\east_x + \total_z - \south.   \label{eq:north-orig}
  \end{equation}
  Using that $\east_x\leq |L_x|$ and $\north_y\leq |L_y|$ and $\total_z\leq |L_z|$, we obtain 
  \begin{align*}
  \NE & \leq \north - \north_y + (\NE_y -\NE_x)  &  \mbox{ by (\ref{eq:NE-orig})} \\
      & = \north - \north_y + (\SE - \east_x - \SE_z)  &  \mbox{ by (\ref{eq:NExplusy-orig})} \\
      & = (2\, \north_y + 2 \,\east_x + \total_z - \south) - \north_y + (\SE - \east_x - \SE_z)  &  \mbox{ by (\ref{eq:north-orig})} \\      & \leq |L_x| + |L_y| + |L_z|  - \south + \SE - \SE_z & \\
      & = n  - \SW  - \SE_z \\
      & \leq n  - \SW, 
 \end{align*}
 which finishes the proof of the claim.
  \end{proofinproof}

   Now recall that
   \begin{equation}
   \SW \ge \floor{\south/2} \ge \floor*{\frac{\floor{5n/3} - 2}{4}}. \label{eq:sw-ge}  
   \end{equation}
   Moreover, since we assumed $\north_z \leq \south_z$ and we are in Case~2 
   (so $|L_z|\leq (n-1)/2$) we have
   \begin{equation}
   \NE_z \leq \north_z \leq \floor{|L_z|/2} \leq \floor{(n-1)/4}. \label{eq:nez-le} \end{equation}

Finally, from \eqref{eq:nebound}, the number of incidences in each cell of the north-eastern column is at most
   \begin{align*}
    \frac{\NE+\NE_z}{2} & \leq \frac{n-\SW+\NE_z}{2} & \text{by the Claim above} \\
        & \leq \frac{n-\floor*{\strut(\floor{5n/3}-2)/4}+\NE_z}{2} & \text{by \eqref{eq:sw-ge}} \\
        & \leq \frac{n-\floor*{\strut(\floor{5n/3}-2)/4}+ \floor*{(n-1)/4}}{2} & \text{by \eqref{eq:nez-le}} \\
        & \leq \ceil{5n/12}.
   \end{align*}
   For the last inequality, define $f(n) \coloneqq \frac{n-\floor*{\strut(\floor{5n/3}-2)/4}+ \floor*{(n-1)/4}}{2} - \ceil{5n/12}$.  We need to show that $f(n) \leq 0$ for all $n\in \Nats$.  Observe that $f(n)=f(n+12)$ for any $n$. Hence, it suffices to show that $f(n) \leq 0$ for all integer $n$ from $0$ to $11$, which can be verified by a straightforward computation.  
     \qedhere
\end{proof}

\section{Conclusions}
\label{sec:concl}

Two obvious open problems remain:
\begin{itemize}
    \item Close the gaps between the upper and the lower bounds for axis-parallel lines.
    \item Answer the same question for lines with arbitrary orientations in $\Reals^3$: given a set $L$ of $n$~lines, minimize the number of lines meeting each open cell of the arrangement $\A(H)$ formed by a set~$H$ of three planes.  
    A simple calculation shows that if $L$  is in general position, then
    at least one of the open cells of $\A(H)$ meets at least $4(n-3)/8\approx n/2$
    lines from $L$, since any plane can contain or be parallel to at most one of the lines of~$L$.
    Can we prove larger lower bounds, and what upper bounds
    can we obtain?
\end{itemize}

\bibliographystyle{plainurl}
\bibliography{refs}
\end{document}